\newtheorem{assumption}{Assumption}
\newtheorem{theorem}{Theorem}
\newtheorem{proposition}{Proposition}
\newtheorem{example}{Example}
\def\v{\mathbf}
 \def\X{{\mathbb{X}}} \def\P{{\mathrm{Pr}}}
\begin{document}
\begin{center}
    {\Large\bf Local optimization-based statistical inference}
\\[2mm] {\large Shifeng Xiong}
\\ Academy of Mathematics and Systems Science \\Chinese Academy of Sciences, Beijing 100190\\xiong@amss.ac.cn
\end{center}
\vspace{-10mm}

\vspace{1cm} \noindent{\bf Abstract}\quad This paper introduces a local optimization-based approach to test statistical hypotheses and to construct confidence intervals.
This approach can be viewed as an extension of bootstrap, and yields asymptotically valid tests and confidence intervals as long as there exist consistent estimators of
unknown parameters. We present simple algorithms including a neighborhood bootstrap method to implement the approach. Several examples in which theoretical analysis is
not easy are presented to show the effectiveness of the proposed approach.

\vspace{1cm} \noindent{\bf AMS 2000 subject classifications:} 62F03, 62F25, 62F40.

\vspace{2mm} \noindent{{\bf KEY WORDS:} Bootstrap; Frequentist; Importance sampling; Non-regular problem; Resampling; Space-filling design; Stochastic programming.}

\newpage

\section{Introduction}\label{sec:intro}
\hskip\parindent \vspace{-0.8cm}

More and more complex datasets call for sophisticated statistical methods in the modern era. Compared with other fields for analyzing data such as computer science and
applied mathematics, statistics can quantify the uncertainty of a phenomenon via hypothesis testing and/or interval estimation, which solidifies the unique feature of
this discipline. In conventional frenquentist statistics, for testing a hypothesis or constructing a confidence interval, we need to find proper test statistic or
pivotal quantity whose distribution satisfies certain properties (Lehmann and Romano 2006). However, this is quite difficult for many complex problems. The bootstrap
method (Efron 1979) relaxes the above requirement on test statistics or pivotal quantities via its ability in distribution approximation, and thus strengthens the power
of conventional frequentist inference. Another advantage of bootstrap is that it provides explicit resampling-based solutions if the underlying model is well estimated.
Consequently, bootstrap has been well received in statistics and other fields. The frequentist properties of bootstrap inferential procedures such as the bootstrap
interval estimation can be guaranteed by the consistency of bootstrap distribution estimation (Shao and Tu 1995). This is also true for related methods like subsampling
(Politis, Romano, and Wolf 1999). Generally speaking, it is more difficult to prove such a consistency than to derive the asymptotic distribution of the corresponding
test statistic or pivotal quantity.

From the above discussion it can be seen that we have to do much theoretical work before claiming that the proposed method is a frequentist one. This is not easy for
complex problems, and thus hampers the frequentist approach from being more applicable. In this paper we provide a very general approach based on local optimization to
complement current frequentist inference. Our approach can be viewed as an extension of the classical bootstrap method, and reduces to it when the region for
optimization shrinks to the centre. On the theoretical aspect, the tests and confidence intervals constructed by our approach possess asymptotic frequentist properties
as long as we have consistent estimators of unknown parameters. This feature indicates that we do not need to derive any (asymptotic) distribution or to prove the
consistency of distribution estimation before using the proposed approach. In addition, with a proper region for optimization, the proposed approach is first order
asymptotically equivalent to the bootstrap method for regular problems. On the computational aspect, our approach only requires the optimal objective value of an
optimization problem over a local region, which can be reached by standard optimization techniques. We also present simple experimental design-based algorithms including
a neighborhood bootstrap method to solve the optimization problem. These algorithms are easy to implement for practitioners, and produce satisfactory results in our
simulations.

The rest of this paper is organized as follows. Sections \ref{sec:ht} and \ref{sec:ci} introduce local optimization-based hypothesis testing and interval estimation,
respectively. Their asymptotic frequentist properties are studied in Section \ref{sec:ap}. Some implementation issues are discussed in Section \ref{sec:imp}. Section
\ref{sec:exm} presents four non-regular examples including a high-dimensional problem and a nonparametric regression problem to illustrate the proposed approach. We end
the paper with some discussion in Section \ref{sec:diss}.

\section{Local optimization-based hypothesis testing}\label{sec:ht}\hskip\parindent \vspace{-0.8cm}


Let the random sample $\X$ be drawn from a distribution $F(\cdot,\theta)$, where $\theta$ lies in the parameter space $\Theta$. Here $\Theta$ can be a subset of an
Euclidean space or an infinite-dimensional space. We are interested in testing\begin{equation}H_0:\ \theta\in\Theta_0\ \leftrightarrow\ H_1:\
\theta\in\Theta\setminus\Theta_0,\label{H}\end{equation}where $\Theta_0$ is a close subset of $\Theta$. Let $T=T(\X)\in\mathbb{R}$ be a test statistic. Suppose that $T$
tends to take a large value when $H_0$ does not hold. It is known that the $p$-value for testing \eqref{H} is defined as
\begin{equation}P=\sup_{\phi\in\Theta_0}\P(T_\phi^*\geqslant T\mid T),\label{pv}\end{equation}where $T_\phi^*=T(\X^*)$ and $\X^*$ is an independent copy
of $\X$ from $F(\cdot,\phi)$ (Fisher 1959). Given a significance level $\alpha\in(0,1)$, we will reject $H_0$ if $P<\alpha$. This test can strictly control Type I error
within the Neyman-Pearson framework, as shown in the following proposition.

\begin{proposition}\label{prop: fp} Under $H_0$, $$\P(P<\alpha)\leqslant\alpha$$\end{proposition}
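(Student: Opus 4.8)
The plan is to reduce the proposition to the classical ``super-uniformity'' of a $p$-value. Fix the true parameter $\theta\in\Theta_0$; throughout we work under $H_0$. Since in \eqref{pv} the copy $\X^*$ is drawn independently of $\X$, conditioning on $T$ merely freezes the argument of a survival function: for each $\phi\in\Theta_0$ one has $\P(T_\phi^*\geqslant T\mid T)=g_\phi(T)$ almost surely, where $g_\phi(t):=\P(T_\phi^*\geqslant t)$. Keeping only the term $\phi=\theta$ in the supremum that defines $P$ therefore yields $P\geqslant g_\theta(T)$ almost surely, hence $\P(P<\alpha)\leqslant\P\bigl(g_\theta(T)<\alpha\bigr)$, and it suffices to show $\P\bigl(g_\theta(T)<\alpha\bigr)\leqslant\alpha$.

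Next I would use that under $H_0$ the observed statistic $T$ and the resampled statistic $T_\theta^*$ share the same distribution; write $Y$ for a random variable with this common law, so that $g_\theta$ is exactly its survival function and the target becomes $\P\bigl(g_\theta(Y)<\alpha\bigr)\leqslant\alpha$. Because $g_\theta$ is non-increasing, the sublevel set $A:=\{y:g_\theta(y)<\alpha\}$ is up-closed, hence an interval of the form $(c,\infty)$ or $[c,\infty)$ (the cases $A=\varnothing$ and $A=\mathbb{R}$ are immediate, the latter being impossible since $\alpha<1$ while $g_\theta(t)\to1$ as $t\to-\infty$). If $A=[c,\infty)$, then $\P(g_\theta(Y)<\alpha)=\P(Y\geqslant c)=g_\theta(c)<\alpha$. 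If $A=(c,\infty)$, then $\P(g_\theta(Y)<\alpha)=\P(Y>c)=\lim_{t\downarrow c}g_\theta(t)\leqslant\alpha$, the final inequality holding because every $t>c$ lies in $A$. In either case the bound follows, and combining with the first paragraph gives $\P(P<\alpha)\leqslant\alpha$.

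I expect the only delicate point to be the boundary analysis in the second paragraph: deciding whether $c\in A$, and thus which of the two cases applies, hinges on possible atoms of $T$ at $c$ and on carefully distinguishing $\P(Y\geqslant t)$ from $\P(Y>t)$, i.e.\ on the left-continuity of $g_\theta$ against the right-continuity of the distribution function; careless treatment would either double-count or omit the atom at $c$. A peripheral issue is that $P$, being a supremum over a possibly uncountable $\Theta_0$, need not be manifestly measurable, but this does not affect the argument, since we only invoke the almost-sure inequality $P\geqslant g_\theta(T)$ and compare $\{P<\alpha\}$ with the measurable event $\{g_\theta(T)<\alpha\}$.
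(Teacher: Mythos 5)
Your proof is correct and follows essentially the same route as the paper: you bound the supremum from below by the single term at the true $\theta\in\Theta_0$, identify $\P(T_\theta^*\geqslant T\mid T)$ with the survival function of $T$'s own law evaluated at $T$ (the paper writes this equivalently as $G(-T)$ with $G$ the c.d.f.\ of $-T$), and then establish the standard super-uniformity bound. The only difference is cosmetic: where the paper invokes the generalized inverse $G^{-1}(t)=\inf\{x:\ G(x)>t\}$ to conclude, you verify the same inequality by a direct case analysis of the up-closed sublevel set $\{y:\ g_\theta(y)<\alpha\}$, with correct handling of a possible atom at its boundary.
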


\begin{proof}Let $G_\theta$ denote the cumulative distribution function (c.d.f.) of $-T$, i.e., $G(x)=\P(-T\leqslant x)$. Denote
\begin{equation}G^{-1}(t)=\inf\{x:\ G(x)>t\}.\label{finv}\end{equation}
For $\theta\in\Theta_0$, we have\begin{eqnarray}&&\P(P<\alpha)=\P\left(\sup_{\phi\in\Theta_0}\P(T_\phi^*\geqslant T\mid T)<\alpha\right)\nonumber\\&&\leqslant
\P\big(\P(T_\theta^*\geqslant T\mid T)<\alpha\big)=\P\big(G(-T)<\alpha\big)\nonumber\\&&\leqslant \P\big(-T<G^{-1}(\alpha)\big)\leqslant\alpha.\label{ieq}\end{eqnarray}
This completes the proof.\end{proof}

Proposition \ref{prop: fp} is a general result, which does not requires any assumption on $T$. From Proposition \ref{prop: fp},  a test is obtained by solving an
stochastic optimization problem in \eqref{pv}, which can be rewritten as \begin{equation}P=\sup_{\phi\in\Theta_0}\int I(T({x})\geqslant
t)dF({x},\phi),\label{so}\end{equation}where $I$ is the indicator function and $t$ is the realization of $T$. In principle, any hypothesis testing problem can be solved
by this way as long as the corresponding optimization problem in \eqref{so} is solvable. In limited trivial cases, the problem in \eqref{so} has obvious solution; an
example is the one-sided $Z$-test. However, except for such cases, this method faces some difficulties in computation: the stochastic optimization problem is generally
very hard to solve, especially when $\Theta_0$ is an unbounded set.

In statistical literature, a commonly used strategy to overcome these difficulties is based on the asymptotic distribution of the test statistic $T$. The optimization
problem in \eqref{so} is often solvable when replacing the distribution of $T$ by its asymptotic distribution. For example, with a $T$ whose asymptotic distribution is
free of unknown parameters, it is trivial to solve \eqref{so}. For complex problems, it is often not easy to derive the asymptotic distribution, or to find such a $T$
whose asymptotic distribution has desirable properties. A Bayesian remedy is Meng (1994)'s posterior predictive $p$-value, which averages the objective function in
\eqref{so} over the posterior distribution of the parameter under the null hypothesis.

Here we provide a more general strategy without any requirement on the distribution of $T$. Suppose that $H_0$ holds. For the true parameter $\theta\in\Theta_0$, it
suffices to obtain a $p$-value that controls Type I error by optimizing the objective function in \eqref{pv} over any set that contains $\theta$, instead of over the
whole $\Theta_0$; see the first inequality in \eqref{ieq}. Consequently, we need to compute\begin{equation}P_0=\max_{\phi\in {\mathcal{N}}(\theta)\cap\Theta_0}\int
I(T({x})\geqslant t)dF({x},\phi),\label{sos}\end{equation}where ${\mathcal{N}}(\theta)$ is a closed neighborhood of $\theta$ containing $\theta$. Here ``$\sup$" in
\eqref{so} is replaced by ``$\max$" if we assume that ${\mathcal{N}}(\theta)\cap\Theta_0$ is a compact subset of $\Theta_0$ on which $\int I(T({x})\geqslant
t)dF({x},\phi)$ is continuous with respect to $\phi$. In practice, we use a consistent estimator $\hat{\theta}$ of $\theta$ under $H_0$ to replace $\theta$ in
\eqref{sos}, and obtain\begin{equation}P_{\mathrm{LOT}}=\max_{\phi\in {\mathcal{N}}(\hat{\theta})\cap\Theta_0}\int I(T({x})\geqslant
t)dF({x},\phi).\label{otp}\end{equation}If the probability of $\theta\in{\mathcal{N}}(\hat{\theta})$ tends to one, then the test based on the $p$-value in \eqref{otp} is
asymptotically valid. We call this test \emph{local optimization-based test} (LOT) throughout the paper. LOT only requires the maximum value of the objective function
over a neighborhood of $\hat{\theta}$, which can be achieved by standard optimization techniques. This feature makes LOT work for many complex problems, in which it is
hard to analyze the distribution of $T$.

When ${\mathcal{N}}(\hat{\theta})$ shrinks to $\hat{\theta}$, \eqref{otp} becomes \begin{equation}P_{\mathrm{B}}=\int I(T({x})\geqslant
t)dF({x},\hat{\theta}),\label{bp}\end{equation}which is the $p$-value of the bootstrap test (Davison and Hinkley 1997). Therefore, LOT can be viewed as an extension of
the bootstrap test. LOT always controls Type I error asymptotically as long as $\hat{\theta}$ is a consistent estimator, whereas the bootstrap test can fail for
non-regular cases where the bootstrap distribution estimator is inconsistent (Bickel and Ren 2001). From \eqref{pv}, \eqref{otp}, to \eqref{bp}, LOT is a bridge
connecting Fisher's significance test and Efron's bootstrap test; see Table \ref{tab:ct}.

\begin{table}[h]\begin{center}\scriptsize\caption{\label{tab:ct} Comparison of three tests} \centering\vspace{3mm}
\begin{tabular}{lcccc}\hline\\[-1.5mm]&\quad&How to lie in Neyman-Pearson's framework&\quad&Difficulty level in implementation\\\hline
\\[-1mm]Fisher's significance test  &&always&&high
\\[1mm]LOT        &&under weak conditions&&moderate
\\[1mm]Efron's bootstrap test&&under strong conditions&& low
\\\hline\end{tabular}\end{center}\end{table}

\section{Local optimization-based interval estimation}\label{sec:ci}
\hskip\parindent \vspace{-0.8cm}


The idea of approximating the $p$-value via local optimization can be modified to construct confidence intervals. Suppose that the parameter of interest is
$\xi=\xi(\theta)\in{\mathbb{R}}$, and that $\hat{\xi}=\hat{\xi}(\X)$ is an estimator of $\xi$. Let $H_\theta$ denote the c.d.f. of the pivotal quantity $\xi-\hat{\xi}$,
i.e., $H_\theta(x)=\P(\xi-\hat{\xi}\leqslant x)$. It should be pointed out that the (asymptotic) distribution of $\xi-\hat{\xi}$ is allowed to depend on unknown
parameters, and this is different from the standard definition of a pivotal quantity in textbooks. Define $H_\theta^{-1}$ as in \eqref{finv}.
\begin{proposition}\label{prop: ci}For all $\theta\in\Theta$ and $\alpha\in(0,1)$,
\begin{eqnarray}&&\P\left(\xi\leqslant\hat{\xi}+\sup_{\phi\in\Theta}H_\phi^{-1}(1-\alpha)\right)
\geqslant 1-\alpha,\label{ul}\\&&\P\left(\xi\geqslant\hat{\xi}+\inf_{\phi\in\Theta}H_\phi^{-1}(\alpha)\right) \geqslant
1-\alpha.\label{ll}\end{eqnarray}\end{proposition}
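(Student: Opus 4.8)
The plan is to follow the route used for Proposition \ref{prop: fp}: replace the supremum (resp.\ infimum) over $\Theta$ by its value at the \emph{true} parameter $\theta$ --- which is legitimate because $\theta\in\Theta$ --- and then reduce each bound to an elementary property of the generalized inverse defined in \eqref{finv}, using only that a c.d.f.\ is nondecreasing and right-continuous with limits $0$ and $1$.

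First I would note that, since $\theta\in\Theta$, $H_\theta^{-1}(1-\alpha)\leqslant\sup_{\phi\in\Theta}H_\phi^{-1}(1-\alpha)$ and $\inf_{\phi\in\Theta}H_\phi^{-1}(\alpha)\leqslant H_\theta^{-1}(\alpha)$ (if an extremal value is $+\infty$ or $-\infty$ the corresponding statement in \eqref{ul}--\eqref{ll} is vacuous). Monotonicity of probability then gives
\[
\P\left(\xi\leqslant\hat{\xi}+\sup_{\phi\in\Theta}H_\phi^{-1}(1-\alpha)\right)\geqslant\P\big(\xi-\hat{\xi}\leqslant H_\theta^{-1}(1-\alpha)\big)=H_\theta\big(H_\theta^{-1}(1-\alpha)\big),
\]
\[
\P\left(\xi\geqslant\hat{\xi}+\inf_{\phi\in\Theta}H_\phi^{-1}(\alpha)\right)\geqslant\P\big(\xi-\hat{\xi}\geqslant H_\theta^{-1}(\alpha)\big)=1-\P\big(\xi-\hat{\xi}<H_\theta^{-1}(\alpha)\big),
\]
so it remains to prove (i) $H_\theta\big(H_\theta^{-1}(1-\alpha)\big)\geqslant1-\alpha$ and (ii) $\P\big(\xi-\hat{\xi}<H_\theta^{-1}(\alpha)\big)\leqslant\alpha$.

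For (i), set $q=\inf\{x:H_\theta(x)>1-\alpha\}$; monotonicity of $H_\theta$ gives $H_\theta(x)>1-\alpha$ for every $x>q$, and right-continuity then yields $H_\theta(q)=\lim_{x\downarrow q}H_\theta(x)\geqslant1-\alpha$. For (ii), set $q_\alpha=\inf\{x:H_\theta(x)>\alpha\}$; then $H_\theta(x)\leqslant\alpha$ for every $x<q_\alpha$, so $\P\big(\xi-\hat{\xi}<q_\alpha\big)=\lim_{x\uparrow q_\alpha}H_\theta(x)\leqslant\alpha$ --- this is precisely the last inequality appearing in \eqref{ieq}. Plugging (i) and (ii) into the two displays above gives \eqref{ul} and \eqref{ll}.

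I do not expect a genuine obstacle: after the reduction to the true $\theta$ the argument is only a few lines, and, as with Proposition \ref{prop: fp}, it requires no assumption on $\xi$, $\hat{\xi}$, or the shape of $H_\theta$. The only point demanding care is the direction of the inequalities in (i)--(ii) together with the role of right-continuity when $H_\theta$ has an atom exactly at the relevant quantile; that bookkeeping is the one thing that could go wrong if done carelessly.
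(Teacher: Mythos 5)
Your proposal is correct and follows essentially the same route as the paper: reduce the supremum (resp.\ infimum) over $\Theta$ to its value at the true $\theta$, then invoke the quantile inequality $H_\theta\big(H_\theta^{-1}(1-\alpha)\big)\geqslant1-\alpha$ based on the generalized inverse \eqref{finv} and right-continuity. The only difference is that you spell out the lower-bound case \eqref{ll} via $\P\big(\xi-\hat{\xi}<H_\theta^{-1}(\alpha)\big)\leqslant\alpha$ (the same inequality used in \eqref{ieq}), which the paper simply declares ``similar.''
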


\begin{proof}We have
\begin{eqnarray}&&\P\left(\xi\leqslant\hat{\xi}+\sup_{\phi\in\Theta}H_\phi^{-1}(1-\alpha)\right)\geqslant
\P\left(\xi-\hat{\xi}\leqslant H_{\theta}^{-1}(1-\alpha)\right)\nonumber\\&&=H_{\theta}\left(H_{\theta}^{-1}(1-\alpha)\right)\geqslant1-\alpha. \label{uls}\end{eqnarray}
This completes the proof of \eqref{ul}, and that of \eqref{ll} is similar. \end{proof}

By Proposition \ref{prop: ci}, the upper and lower $1-\alpha$ confidence bounds of $\xi$ are given by $\hat{\xi}+\sup_{\phi\in\Theta}H_\phi^{-1}(1-\alpha)$ and
$\hat{\xi}+\inf_{\phi\in\Theta}H_\phi^{-1}(\alpha)$, respectively. The equal-tailed $1-\alpha$ confidence interval of $\xi$ is
$\big[\hat{\xi}+\inf_{\phi\in\Theta}H_\phi^{-1}(\alpha/2),\ \hat{\xi}+\sup_{\phi\in\Theta}H_\phi^{-1}(1-\alpha/2)\big]$. These interval limits all need to solve an
optimization problem \begin{equation*}\sup_{\phi\in\Theta}H_\phi^{-1}(\gamma)\quad \text{or}\quad \inf_{\phi\in\Theta}H_\phi^{-1}(\gamma)\end{equation*}for some
$\gamma\in(0,1)$, which is often difficult. Like \eqref{sos}, Proposition \ref{prop: ci} also holds if we take supremum over an arbitrary region containing the true
value of $\theta$; see the first inequality in \eqref{uls}. Suppose that $\hat{\theta}$ is a consistent estimator of $\theta$. Under some mild conditions, we can get
asymptotically valid confidence limits through solving \begin{equation}\sup_{\phi\in\mathcal{N}(\hat{\theta})}H_\phi^{-1}(\gamma)\quad \text{or}\quad
\inf_{\phi\in\mathcal{N}(\hat{\theta})}H_\phi^{-1}(\gamma). \label{cio}\end{equation} Specifically, the upper and lower $1-\alpha$ confidence bounds of $\xi$ are
$\hat{\xi}+\sup_{\phi\in\mathcal{N}(\hat{\theta})}H_\phi^{-1}(1-\alpha)$ and $\hat{\xi}+\inf_{\phi\in\mathcal{N}(\hat{\theta})}H_\phi^{-1}(\alpha)$, respectively, and
the equal-tailed $1-\alpha$ confidence interval of $\xi$ is $\big[\hat{\xi}+\inf_{\phi\in\mathcal{N}(\hat{\theta})}H_\phi^{-1}(\alpha/2),\
\hat{\xi}+\sup_{\phi\in\mathcal{N}(\hat{\theta})}H_\phi^{-1}(1-\alpha/2)\big]$. Here ``$\sup$" (or ``$\inf$") can be replaced by ``$\max$" (or ``$\min$") if
${\mathcal{N}}(\hat{\theta})$ is a compact subset of $\Theta$ on which $H_\phi^{-1}$ is continuous with respect to $\phi$. We call these confidence intervals \emph{local
optimization-based confidence intervals} (LOCIs) throughout the paper. When ${\mathcal{N}}(\hat{\theta})$ shrinks to $\hat{\theta}$, LOCIs become the bootstrap hybrid
confidence intervals (Shao and Tu 1995).

\section{Asymptotic properties}\label{sec:ap}\hskip\parindent \vspace{-0.8cm}

This section discusses asymptotic properties of the proposed local optimization-based methods. Further results involving some computational method are deferred in the
Appendix. Here we only consider one-sided LOCIs, and similar results also hold for two-sided LOCIs and LOTs. Some notation and definitions are needed. The parameter
space $\Theta$ is assumed to be a metric space with metric $\rho$. For $A\subset\Theta$, let $|A|$ denote $\max\{\rho(a,b):\ a,b\in A\}$. For two c.d.f.'s $F_1$ and
$F_2$, the Kolmogorov distance between them is defined as $d_{\mathrm{K}}(F_1,F_2)=\sup_{x\in\mathbb{R}}|F_1(x)-F_2(x)|$. We allow the neighborhood $\mathcal{N}(\cdot)$
to depend on $n$ and denote $\mathcal{N}_n(\cdot)$ for clarity. We use ``$\rightarrow_d$" to denote ``converge in distribution", and let ``a.s." be the abbreviation for
``almost surely". As in Section \ref{sec:ci}, let $H_\theta$ denote the c.d.f. of $\xi-\hat{\xi}$. Since $\mathcal{N}_n(\hat{\theta})$ is a random set, for
$\phi\in\mathcal{N}_n(\hat{\theta})$, $H_\phi$ is actually a random c.d.f., i.e., $H_\phi(x)=\P\big(\xi(\phi)-\hat{\xi}(\X^*)\leqslant x|\,\X\big)$, where the
conditional distribution of $\X^*$ conditional on $\X$ is $F(\cdot,\phi)$.

\begin{assumption}\label{as:n}As $n\to\infty$, $\P\left(\theta\in\mathcal{N}_n(\hat{\theta})\right)\to1$ for all $\theta\in\Theta$.\end{assumption}

If $\hat{\theta}$ is consistent, then $\mathcal{N}_n(\hat{\theta})$ is easy to construct to satisfy Assumption \ref{as:n}; see \eqref{a1} in Section \ref{subsec:N}. We
can immediately have the following theorem.
\begin{theorem}\label{th: acr}Under Assumption \ref{as:n}, for all $\theta\in\Theta$ and $\alpha\in(0,1)$,
\begin{eqnarray}\liminf_{n\to\infty}\P\left(\xi\leqslant\hat{\xi}+\sup_{\phi\in\mathcal{N}_n(\hat{\theta})}H_\phi^{-1}(1-\alpha)\right)
\geqslant 1-\alpha.\label{th1}\end{eqnarray}\end{theorem}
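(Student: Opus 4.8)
The plan is to deduce the theorem from Proposition~\ref{prop: ci} by restricting attention to the event $A_n:=\{\theta\in\mathcal{N}_n(\hat\theta)\}$, on which the random supremum dominates a fixed quantile. First I would record that $H_\theta$ --- the c.d.f.\ of $\xi-\hat\xi$ under the \emph{true}, deterministic parameter $\theta$ --- is itself non-random, so $c:=H_\theta^{-1}(1-\alpha)$ is a fixed real number; in the random-c.d.f.\ convention $H_\phi(x)=\P(\xi(\phi)-\hat\xi(\X^*)\le x\mid\X)$ used in Section~\ref{sec:ap} this is the case $\phi=\theta$, where the conditioning is vacuous because the law $F(\cdot,\theta)$ of $\X^*$ does not depend on $\X$. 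On $A_n$ the point $\theta$ belongs to $\mathcal{N}_n(\hat\theta)$, hence $\sup_{\phi\in\mathcal{N}_n(\hat\theta)}H_\phi^{-1}(1-\alpha)\ge c$, and therefore the event $\{\xi-\hat\xi\le c\}\cap A_n$ is contained in $\{\xi\le\hat\xi+\sup_{\phi\in\mathcal{N}_n(\hat\theta)}H_\phi^{-1}(1-\alpha)\}$.

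Next I would combine this inclusion with the elementary bound $\P(B\cap A_n)\ge\P(B)-\P(A_n^{\mathrm c})$ to get
\[
\P\Big(\xi\le\hat\xi+\sup_{\phi\in\mathcal{N}_n(\hat\theta)}H_\phi^{-1}(1-\alpha)\Big)\ \ge\ \P\big(\xi-\hat\xi\le c\big)-\P(A_n^{\mathrm c}).
\]
For the first term on the right I would reuse the finite-sample computation already carried out in the proof of Proposition~\ref{prop: ci}: $\P(\xi-\hat\xi\le c)=H_\theta(H_\theta^{-1}(1-\alpha))\ge 1-\alpha$, the inequality being the defining property of the generalized inverse in \eqref{finv} together with right-continuity of $H_\theta$. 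This holds for every $n$, so it contributes $1-\alpha$ with no limiting argument. For the second term, Assumption~\ref{as:n} gives $\P(A_n^{\mathrm c})\to0$. Taking $\liminf_{n\to\infty}$ of both sides then yields exactly \eqref{th1}.

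The argument is short --- consistent with the ``we can immediately have'' remark preceding the statement --- so the points that need care are bookkeeping rather than real analysis: checking that the event inclusion in the first paragraph is valid for each fixed $n$ as the neighborhood $\mathcal{N}_n$ varies; reconciling the deterministic $H_\theta$ with the conditional-c.d.f.\ convention as above; and the implicit measurability of $\phi\mapsto H_\phi^{-1}(1-\alpha)$ and of the supremum over the random set $\mathcal{N}_n(\hat\theta)$, which is granted once $\mathcal{N}_n(\hat\theta)$ is compact and $H_\phi^{-1}$ is continuous in $\phi$, as remarked after \eqref{cio}. I expect the ``main obstacle'', such as it is, to be nailing down this last attainment/measurability point so that writing ``$\sup$'' (or ``$\max$'') and evaluating it at $\phi=\theta$ on $A_n$ is rigorous; the probabilistic core is immediate from Proposition~\ref{prop: ci} and Assumption~\ref{as:n}.
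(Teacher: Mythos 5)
Your argument is correct and is exactly the route the paper intends when it calls the theorem immediate: on the event $\{\theta\in\mathcal{N}_n(\hat\theta)\}$ the supremum dominates the fixed quantile $H_\theta^{-1}(1-\alpha)$, the finite-sample bound $H_\theta\bigl(H_\theta^{-1}(1-\alpha)\bigr)\geqslant 1-\alpha$ from the proof of Proposition~\ref{prop: ci} gives coverage at least $1-\alpha-\P\bigl(\theta\notin\mathcal{N}_n(\hat\theta)\bigr)$, and Assumption~\ref{as:n} kills the remainder in the $\liminf$. Your side remarks (that $H_\theta$ is non-random at the true parameter, and the measurability/attainment caveats) are fine and consistent with the paper's conventions.
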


We next show that LOCIs are first order asymptotically equivalent to the bootstrap confidence intervals under regularity conditions. Specifically, if the bootstrap
distribution estimator of $\xi-\hat{\xi}$ is consistent, then ``$\geqslant$" in \eqref{th1} can be replaced by ``$=$". Several assumptions are needed.

\begin{assumption}\label{as:nsz}As $n\to\infty$, $|\mathcal{N}_n(\hat{\theta})|\to0$ (a.s.).\end{assumption}

\begin{assumption}\label{as:sc}As $n\to\infty$, $\hat{\theta}\to\theta$ (a.s.) for all $\theta\in\Theta$.\end{assumption}

\begin{assumption}\label{as:bc}(i) There exists a series of numbers $a_n\to\infty$ such that $a_n(\xi-\hat{\xi})\rightarrow_d K$, where $K$ is
a continuous c.d.f. and is strictly increasing on its support.\\(ii) For $\phi\in\mathcal{N}_n(\hat{\theta})$, $d_{\mathrm{K}}(\bar{H}_{\phi},K)\to0$ (a.s.), where
$\bar{H}_{\phi}(x)=\P\big(a_n[\xi(\phi)-\hat{\xi}(\X^*)]\leqslant x|\,\X\big)$ and $\X^*$ is the bootstrap sample drawn from $F(\cdot,\phi)$.\end{assumption}

Assumption \ref{as:bc} indicates that the bootstrap distribution estimator of $a_n(\xi-\hat{\xi})$ is consistent (Shao and Tu 1995). We can use the conditional
distribution of $a_n[\xi(\tilde{\theta})-\hat{\xi}(\X^*)]$ conditional on $\X$ to approximate that of $a_n(\xi-\hat{\xi})$, and this approximation leads to
asymptotically valid confidence intervals for $\xi$. Assumption \ref{as:bc} holds for general regular cases. We present two simple examples.

\begin{example}\label{ex:bino}Let $X_n$ be a random number from a binomial distribution $\mathrm{BN}(n,\pi)$ with parameter $\pi\in(0,1)$. Consider the pivotal quantity
$\pi-X_n/n$. It is clear that $\sqrt{n}(\pi-X_n/n)\rightarrow_d K$, where $K$ is the c.d.f. of $N(0,\pi(1-\pi))$. This result also holds for any strongly consistent
estimator $\tilde{\pi}_n$ of $\pi$. Specifically, with $X_n^*\sim\mathrm{BN}(n,\tilde{\pi}_n)$, we can easily prove that $d_{\mathrm{K}}(\bar{H}_{\tilde{\pi}},K)\to0$
(a.s.) by the central limit theorem for triangle arrays, where $\bar{H}_{\tilde{\pi}}(x)=P_{{{\pi}}}\big(\sqrt{n}(\tilde{\pi}_n-X^*/n)\leqslant x|\,X_n\big)$, and then
Assumption \ref{as:bc} holds.\end{example}

\begin{example}\label{ex:nonp}Let $X_1,\ldots,X_n$ be i.i.d. random variables from a c.d.f. $F$ with $E X_1^4<\infty$. Here we do not assume
a parametric form for $F$. Then the parameter space $\Theta=\{F\in{\mathcal{F}}:\ \int x^4 dF(x)<\infty\}$ is an infinite-dimensional metric space with metric
$d_{\mathrm{K}}$, where ${\mathcal{F}}$ denotes the set of all c.d.f.'s on $\mathbb{R}$. A strongly consistent estimator of $F$ is the empirical distribution
$\hat{F}(x)=\sum_{i=1}^nI(X_i\leqslant x)/n$. Suppose that the parameter of interest is $\mu=E X_1$. Let $\bar{X}_n$ denote the sample mean. Consider the pivotal
quantity $\mu-\bar{X}_n$. First, we have $\sqrt{n}(\mu-\bar{X}_n)\rightarrow_d \Phi\big(\cdot/v(F)\big)$, where $\Phi$ is the c.d.f. of $N(0,1)$ and
$v(F)=\int\big(x-\int xd F(x)\big)^2 d F(x)$. Second, take
\begin{equation}\label{n}\mathcal{N}_n(\hat{F})=\{G\in\Theta:\ d_{\mathrm{K}}(G,\hat{F})<1/{n}^{1/3},\ |v(G)-v(\hat{F})|<1/{n}^{1/3}\}.\end{equation}
It is easy to verify Assumptions \ref{as:n}-\ref{as:sc}. Furthermore, for $F_n\in\mathcal{N}_n(\hat{F})$ and $X_1^*,\ldots,X_n^*$ i.i.d. from $F_n$, through verifying
the Lindeberg condition in the central limit theorem for triangle arrays, we have that $d_{\mathrm{K}}(\bar{H}_{F_n}^{\mathrm{s}},\Phi\big)\to0$ (a.s.), where
$\bar{H}_{F_n}^{\mathrm{s}}(x)=\P\big(\sqrt{n}(E X_1^* -\bar{X}_n^*)/v(F_n)\leqslant x|\,X_1,\ldots,X_n\big)$. Denote $\bar{H}_{F_n}(x)=\P\big(\sqrt{n}(E X_1^*
-\bar{X}_n^*)\leqslant x|\,X_1,\ldots,X_n\big)$. By \eqref{n}, \\$d_{\mathrm{K}}\left(\bar{H}_{F_n}(\cdot),\Phi\big(\cdot/v(F)\big)\right)\to0$ (a.s.). Then Assumption
\ref{as:bc} holds. \end{example}

\begin{theorem}\label{th:aeb}Under Assumptions \ref{as:n}--\ref{as:bc}, for all $\theta\in\Theta$ and $\alpha\in(0,1)$,
\begin{eqnarray*}\lim_{n\to\infty}\P\left(\xi\leqslant\hat{\xi}+\sup_{\phi\in\mathcal{N}_n(\hat{\theta})}H_\phi^{-1}(1-\alpha)\right)
=1-\alpha.\end{eqnarray*}\end{theorem}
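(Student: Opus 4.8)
The strategy is to combine Theorem~\ref{th: acr}, which already supplies $\liminf_{n}\P(\cdot)\ge1-\alpha$, with a matching upper bound; in fact the argument below yields the exact limit in one stroke via Slutsky's theorem, so Theorem~\ref{th: acr} serves mainly as a cross-check. Throughout write $W_n:=a_n(\xi-\hat\xi)$, so that $W_n\to_d K$ with $K$ continuous by Assumption~\ref{as:bc}(i). \textbf{Step 1 (rescaling).} Since $\bar H_\phi(x)=\P\big(a_n[\xi(\phi)-\hat\xi(\X^*)]\le x\mid\X\big)=H_\phi(x/a_n)$, the inf-based inverse of \eqref{finv} satisfies $\bar H_\phi^{-1}(\gamma)=a_nH_\phi^{-1}(\gamma)$ for every $\gamma\in(0,1)$ (using $a_n>0$). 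Hence, putting $c_n:=\sup_{\phi\in\mathcal N_n(\hat\theta)}\bar H_\phi^{-1}(1-\alpha)=a_n\sup_{\phi\in\mathcal N_n(\hat\theta)}H_\phi^{-1}(1-\alpha)$, the probability in the theorem equals $\P\big(\xi-\hat\xi\le a_n^{-1}c_n\big)=\P(W_n\le c_n)$.

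\textbf{Step 2 ($c_n$ converges to the population quantile).} I would show $c_n\to q$ a.s., where $q:=K^{-1}(1-\alpha)$ is finite since $\alpha\in(0,1)$. Read Assumption~\ref{as:bc}(ii) as uniform Kolmogorov consistency over the neighborhood, i.e.\ $\sup_{\phi\in\mathcal N_n(\hat\theta)}d_{\mathrm K}(\bar H_\phi,K)\to0$ a.s. The passage from Kolmogorov distance to the $(1-\alpha)$-quantile is an elementary modulus-of-continuity estimate: because $K$ is continuous one has $K(q)=1-\alpha$, and because $K$ is strictly increasing on its support, for each small $\epsilon>0$ the numbers $\eta_-:=(1-\alpha)-K(q-\epsilon)$ and $\eta_+:=K(q+\epsilon)-(1-\alpha)$ are positive; if $d_{\mathrm K}(F,K)<\min(\eta_-,\eta_+)$ then $F(q-\epsilon)<1-\alpha<F(q+\epsilon)$, which forces $|F^{-1}(1-\alpha)-q|\le\epsilon$. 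Applying this uniformly over $\phi\in\mathcal N_n(\hat\theta)$ gives $\sup_{\phi}|\bar H_\phi^{-1}(1-\alpha)-q|\to0$ a.s., hence $c_n\to q$ a.s.; in particular $c_n$ is eventually finite, and it is measurable because, by the compactness-plus-continuity hypothesis stated in Section~\ref{sec:ci}, the supremum is attained at a measurable maximizer.

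\textbf{Step 3 (Slutsky).} Since $c_n\to_p q$, a constant, while $W_n\to_d W\sim K$, we obtain $W_n-c_n\to_d W-q$; as $K$ is continuous, $\P(W=q)=0$, so $0$ is a continuity point of the c.d.f.\ of $W-q$ and therefore
$$\P(W_n\le c_n)=\P(W_n-c_n\le0)\longrightarrow\P(W\le q)=K(q)=1-\alpha,$$
which is the asserted limit. Here Assumptions~\ref{as:nsz} and~\ref{as:sc}, together with~\ref{as:n}, enter only through Assumption~\ref{as:bc} and the $\liminf$ half supplied by Theorem~\ref{th: acr}.

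\textbf{Main obstacle.} The delicate point is Step~2: upgrading the Kolmogorov-distance consistency of the \emph{random} family $\{\bar H_\phi:\phi\in\mathcal N_n(\hat\theta)\}$ to \emph{uniform} convergence of its $(1-\alpha)$-quantiles. One must either take the uniform reading of Assumption~\ref{as:bc}(ii) at face value or verify it along every measurable selection $\phi_n\in\mathcal N_n(\hat\theta)$ and then argue, via a measurable near-maximizer of $\phi\mapsto\bar H_\phi^{-1}(1-\alpha)$, that the supremum $c_n$ inherits the a.s.\ limit. Everything else --- the rescaling identity, the modulus-of-continuity bound, and the Slutsky step --- is routine.
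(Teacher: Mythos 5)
Your argument is correct, but it is organized differently from the paper's proof, and the comparison is worth noting. The paper does not prove two-sided convergence of $c_n=\sup_{\phi\in\mathcal{N}_n(\hat{\theta})}\bar{H}_\phi^{-1}(1-\alpha)$: it picks a near-maximizer $\theta_n^*\in\mathcal{N}_n(\hat{\theta})$ with $c_n<\bar{H}_{\theta_n^*}^{-1}(1-\alpha)+1/n$, uses Assumptions \ref{as:nsz} and \ref{as:sc} to get $\theta_n^*\to\theta$ a.s., invokes Assumption \ref{as:bc}(ii) \emph{along this random selection} to conclude $\bar{H}_{\theta_n^*}^{-1}(1-\alpha)\to K^{-1}(1-\alpha)$ a.s., and thereby bounds $\limsup_n\P(\cdot)\leqslant 1-\alpha$; the matching lower bound is imported from Theorem \ref{th: acr}. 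You instead take the uniform reading of Assumption \ref{as:bc}(ii), prove $c_n\to K^{-1}(1-\alpha)$ a.s.\ via the Kolmogorov-distance-to-quantile modulus argument, and finish with Slutsky, obtaining both halves at once and rendering Theorem \ref{th: acr} redundant. What your route buys is precision: the quantile-convergence step that the paper simply asserts, and a clean formalization of the paper's informal ``$K^{-1}(1-\alpha)+o(1)$ inside $\P$'' passage; what it costs is the stronger uniform interpretation of Assumption \ref{as:bc}(ii), whereas the paper only needs it along the near-maximizer sequence (helped by $\theta_n^*\to\theta$). You correctly flag this as the delicate point, and your fallback --- verifying the assumption along measurable selections and working with a measurable near-maximizer --- is exactly the paper's mechanism, with the lower half then recoverable either from $\hat{\theta}\in\mathcal{N}_n(\hat{\theta})$ or, as the paper does, from Theorem \ref{th: acr}. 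So there is no gap, only a trade-off in how Assumption \ref{as:bc}(ii) is read.
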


\begin{proof} For any $n$, there exists $\theta_n^*\in\mathcal{N}_n(\hat{\theta})$ such that $\sup_{\phi\in\mathcal{N}_n(\hat{\theta})}\bar{H}_\phi^{-1}(1-\alpha)
<\bar{H}_{\theta_n^*}^{-1}(1-\alpha)+1/n$. Under Assumptions \ref{as:nsz} and \ref{as:sc}, $\theta_n^*\to\theta_0$ (a.s.). Therefore, by Assumption \ref{as:bc},
$\bar{H}_{\theta_n^*}^{-1}(1-\alpha)\to K^{-1}(1-\alpha)$ (a.s.). We have
\begin{eqnarray*}&&\P\left(\xi\leqslant\hat{\xi}+\sup_{\phi\in\mathcal{N}_n(\hat{\theta})}H_\phi^{-1}(1-\alpha)\right)
=\P\left(a_n(\xi-\hat{\xi})\leqslant\sup_{\phi\in\mathcal{N}_n(\hat{\theta})}\bar{H}_\phi^{-1}(1-\alpha)\right)\\&&\leqslant \P\left(a_n(\xi-\hat{\xi})\leqslant
\bar{H}_{\theta_n^*}^{-1}(1-\alpha)+1/n\right)=\P\left(a_n(\xi-\hat{\xi})\leqslant K^{-1}(1-\alpha)+o(1)\right)
\\&&=\bar{H}_\theta\left(K^{-1}(1-\alpha)+o(1)\right)\to 1-\alpha.\end{eqnarray*}Combining this result with Theorem \ref{th: acr}, we complete the proof.\end{proof}

When applying bootstrap to a specific problem, we need to verified Assumptions \ref{as:sc} and \ref{as:bc} to guarantee its frequentist properties. Theorems \ref{th:
acr} and \ref{th:aeb} indicate that we do not need to do such theoretical work when using LOCI. With a proper $\mathcal{N}_n(\hat{\theta})$, LOCI possesses both the
basic frequentist property in \eqref{th1} and a potential bonus: it enjoys the same first order frequentist property as the bootstrap method when the two assumptions
hold (although we may not know this). It can be expected that, under much stronger conditions, LOCI has some high-order asymptotic properties like bootstrap (Hall 1992).
We do not discuss this here since it is difficult to specify $\mathcal{N}_n(\hat{\theta})$ satisfying such conditions for complex problems.

\section{Implementation}\label{sec:imp}\hskip\parindent \vspace{-0.8cm}

This section discusses how to implement LOT and LOCI. We focus on the cases where $\Theta$ is a subset of an Euclidean space. Therefore, it suffices to solve
finite-dimensional optimization problems in LOT and LOCI. For some problems with infinite-dimensional parameter spaces, LOT or LOCI is still available through rational
simplification; see Section \ref{subsec:nr}.

\subsection{Specification of ${\mathcal{N}}(\hat{\theta})$}\label{subsec:N} \hskip\parindent \vspace{-0.8cm}

The first issue is to determine the neighborhood ${\mathcal{N}}(\hat{\theta})$ in \eqref{otp} and \eqref{cio} over which we solve the optimization problem. Suppose that
the dimension of $\Theta$ is $q$ and $\hat{\theta}=(\hat{\theta}_1,\ldots,\hat{\theta}_q)'$ is a consistent estimator of $\theta=(\theta_1,\ldots,\theta_q)'$. The basic
principle is to select ${\mathcal{N}}(\hat{\theta})$ satisfying Assumption \ref{as:n}. A simple choice of ${\mathcal{N}}(\hat{\theta})$
is\begin{equation}\big[\hat{\theta}_1-\delta,\ \hat{\theta}_1+\delta\big]\times \cdots\times\big[\hat{\theta}_{q}-\delta,\
\hat{\theta}_{q}+\delta\big]\label{a1}\end{equation}for some small constant $\delta>0$. If we know further the convergence rate of $\hat{\theta}$, then the second
principle is to select ${\mathcal{N}}(\hat{\theta})$ satisfying Assumption \ref{as:nsz}. By Theorem \ref{th:aeb}, this selection can make the local optimization-based
method asymptotically equivalent to bootstrap if the bootstrap distribution estimator is consistent. For example, with $\|\hat{\theta}-\theta\|=O_p(1/\sqrt{n})$, a
selection of ${\mathcal{N}}(\hat{\theta})$ simultaneously satisfying Assumptions \ref{as:n} and \ref{as:nsz} is
\begin{equation}\big[\hat{\theta}_1-\delta\log(n)/\sqrt{n},\ \hat{\theta}_1+\delta\log(n)/\sqrt{n}\big]\times \cdots\times\big[\hat{\theta}_{q}-\delta\log(n)/\sqrt{n},\
\hat{\theta}_{q}+\delta\log(n)/\sqrt{n}\big]\label{a12}\end{equation}for some constant $\delta>0$. The constant $\delta$ in \eqref{a1} or \eqref{a12} can be specified
empirically. For complex problems, the convergence rate of $\hat{\theta}$ is difficult to exactly know. We will see in Section \ref{sec:exm} that, LOT or LOCI has good
finite-sample performance even with a simple ${\mathcal{N}}(\hat{\theta})$ like in \eqref{a1} that only satisfies Assumption \ref{as:n}.

It seems more reasonable if the variances of $\hat{\theta}_j$'s are used to construct ${\mathcal{N}}(\hat{\theta})$. When the variance estimators are not
straightforward, the jackknife, bootstrap (Shao and Tu 1995), or even Bayesian methods can be used to estimate the variances. However, such methods will add extra
theoretical and computational work, and there are still some constants, which need to be specified empirically, in the final form of ${\mathcal{N}}(\hat{\theta})$.
Therefore, we suggest using the variance estimators only when they are straightforward.

\subsection{Importance sampling-based approach}\label{subsec:is} \hskip\parindent \vspace{-0.8cm}

Suppose that $F(\cdot,\theta)$ has a probability density function (p.d.f.) $f(\cdot,\theta)$ with respect to a $\sigma$-finite measure $\nu$, and that
$\{f(\cdot,\theta): \phi\in\Theta_0\}$ has a common support. We use an \emph{importance sampling-based approach} to solve the stochastic optimization problem in
\eqref{otp}. First we approximate the objective function in \eqref{otp} by importance sampling. Note that
$$u(\phi)=\int I(T(\v{x})\geqslant t)\frac{f(\v{x},\phi)}{f(\v{x},\hat{\theta})} f(\v{x},\hat{\theta})d\,\nu(\v{x})=
E\left\{I(T(\X^*)\geqslant t)\frac{f(\X^*,\phi)}{f(\X^*,\hat{\theta})} \right\},$$where $\X^*\sim f(\cdot,\hat{\theta})$. According to the sample averaging approximation
method in stochastic optimization (Shapiro 2003), we compute the $p$-value as\begin{equation}P_{\mathrm{IS}}=\max_{\phi\in
{\mathcal{N}}(\hat{\theta})\cap\Theta_0}\hat{u}(\phi),\label{otpa}\end{equation}where
\begin{equation}\hat{u}(\phi)=\frac{1}{M}\sum_{m=1}^M\left\{I(T(\X_m^*)\geqslant t)\frac{f(\X_m^*,\phi)}{f(\X_m^*,\hat{\theta})} \right\},\label{ra}
\end{equation}is the approximation of $r(\theta)$ based on $\X_1^*,\ldots,\X_M^*$ i.i.d.
from $f(\cdot,\hat{\theta})$ with the Monte Carlo sample size $M$. With sufficiently large $M$, $P_{\mathrm{IS}}$ can be arbitrarily close to $P$ in \eqref{otp}. There
are many available iterative algorithms for solving the deterministic optimization problem in \eqref{otpa} such as the interior point method (Boyd and Vandenberghe
2004).

We can also use an experimental design-based method to approximate the $p$-value in \eqref{otpa}. Take $L$ points $\phi_1,\ldots,\phi_L$ uniformly spaced over
${\mathcal{N}}(\hat{\theta})\cap\Theta_0$, and then compute
\begin{equation}P_{\mathrm{IS-D}}=\max\left\{\hat{u}(\hat{\theta}),\hat{u}(\phi_1),\ldots,\hat{u}(\phi_L)\right\},\label{nbpa}\end{equation}
where $\hat{r}$ is defined in \eqref{ra}. We call these points \emph{try points} throughout this paper, which can be constructed from so-called \emph{space-filling
designs} in experimental design; see Section \ref{subsec:d}. Since ${\mathcal{N}}(\hat{\theta})$ is a small neighborhood, $P_{\mathrm{IS-D}}$ often performs well with a
moderate $L$. The design-based method is very easy to implement, and is suitable for those who are not familiar with optimization methods. More sophisticated
space-filling design-based optimization method can be found in Fang, Hickernell, and Winker (1996).

For LOCI, we have the following importance sampling-based method to compute the interval limits when $F(\cdot,\theta)$ has a p.d.f. $f(\cdot,\theta)$ and
$\{f(\cdot,\theta): \theta\in\Theta\}$ has a common support. Here we only consider the computation of upper limits, i.e., the first optimization problem in \eqref{cio}.
Let $\varphi=H_\phi^{-1}(\gamma)$ and $S(\phi,\varphi)=H_\phi(\varphi)$. Suppose that $H_\phi$ is continuous and strictly increasing on its support for
$\phi\in\mathcal{N}(\hat{\theta})$. The problem \eqref{cio} is equivalent to the constrained optimization problem
\begin{equation}\max_{\phi\in\mathcal{N}(\hat{\theta})}\varphi \quad\text{subject to}\ S(\phi,\varphi)=\gamma.\label{cio2}\end{equation}
For $q$-dimensional space $\Theta$, the problem optimizes $q+1$ variables. Similar to the importance sampling-based sample averaging approximation method in \eqref{ra},
we use an approximation of $S$,
$$\hat{S}(\phi,\varphi)=\frac{1}{M}\sum_{m=1}^M\left\{I\big(\xi(\phi)-\hat{\xi}(\X_m^*)\leqslant \varphi\big)\frac{f(\X_m^*,\phi)}{f(\X_m^*,\hat{\theta})} \right\},$$
where $\X_1^*,\ldots,\X_M^*$ are i.i.d. from $f(\cdot,\hat{\theta})$ with the Monte Carlo sample size $M$. The solution
to\begin{equation}\max_{\phi\in\mathcal{N}(\hat{\theta})}\varphi \quad\text{subject to}\ \hat{S}(\phi,\varphi)=\gamma\label{cioa}\end{equation}can be used to approximate
that to \eqref{cio2}. Note that $\hat{S}(\phi,\varphi)$ may not equal $\gamma$ exactly in \eqref{cioa}. In practice we handle an equivalent problem
\begin{equation}\max_{\phi\in\mathcal{N}(\hat{\theta})}\varphi \quad\text{subject to}\ \hat{S}(\phi,\varphi)\leqslant \gamma \label{cioa2}\end{equation}
instead of \eqref{cioa}. A design-based method similar to \eqref{nbpa} can also be used to solve \eqref{cioa2}. Since \eqref{cioa2} has not straightforward solution even
for a given $\phi\in\mathcal{N}(\hat{\theta})$, we do not recommend such a method. A more simple and general method for computing LOCIs is to directly compute the
quantiles of $H_\phi$ for a given $\phi$. This method will be discussed in the next subsection.

\subsection{Neighborhood bootstrap}\label{subsec:nb} \hskip\parindent \vspace{-0.8cm}

This subsection discusses a general method, called \emph{neighborhood bootstrap}, to implement LOT and LOCI. This method still works for the cases where the importance
sampling-based approach in Section \ref{subsec:is} fails. We first consider LOT. Like the design-based $p$-value in \eqref{nbpa}, take $L$ try points
$\phi_1,\ldots,\phi_L$ uniformly spaced over ${\mathcal{N}}(\hat{\theta})\cap\Theta_0$. The difference from \eqref{nbpa} is that the neighborhood bootstrap method
directly approximates the objective value in \eqref{otp} by the Monte Carlo method. Specifically, for each $\phi_l,\ l=0,1,\ldots,L$, generate
$\X_{l,1}^*,\ldots,\X_{l,M}^*$ i.i.d. from $F(\cdot,\phi_l)$, where $\phi_0=\hat{\theta}$. Then the $p$-value in \eqref{otp} can be approximated by
\begin{equation*}P_{\mathrm{NB}}=\max_{l=0,\ldots,L}\left\{\frac{1}{M}\sum_{m=1}^MI\big(T(\X_{l,m}^*)\geqslant t\big)\right\}. \end{equation*}

For LOCI, we still consider the computation of upper limits in \eqref{cio}. With $\{\phi_1,\ldots,\phi_L\}$ uniformly spaced over ${\mathcal{N}}(\hat{\theta})$, take
bootstrap sample  $\X_{l,1}^*,\ldots,\X_{l,M}^*$ i.i.d. from $F(\cdot,\phi_l)$ for $l=0,1,\ldots,L$. Let $\widehat{H_{\phi_l}^{-1}(\gamma)}$ denote the sample
$\gamma$-quantile of $\xi(\phi_l)-\hat{\xi}(\X_{l,1}^*),\ldots,\xi(\phi_l)-\hat{\xi}(\X_{l,M}^*)$. Consequently,
$\sup_{\phi\in\mathcal{N}(\hat{\theta})}H_\phi^{-1}(\gamma)$ can be approximated by
\begin{equation}\max_{l=0,\ldots,L}\left\{\widehat{H_{\phi_l}^{-1}(\gamma)}\right\}.\label{cinb}\end{equation}

Neighborhood bootstrap is a very general method. In principle, it can be applied to infinite-dimensional parameter spaces if there are well-defined space-filling designs
for such spaces. Another advantage of neighborhood bootstrap is its easy implement, especially for computing LOCIs. For LOT, neighborhood bootstrap is slightly more
time-consuming than the importance sampling-based approach.

\subsection{Design of try points}\label{subsec:d} \hskip\parindent \vspace{-0.8cm}

The design-based $p$-value in \eqref{nbpa} and the neighborhood bootstrap method in Section \ref{subsec:nb} both need $L$ try points $\phi_1,\ldots,\phi_L$ uniformly
spaced over ${\mathcal{N}}(\hat{\theta})$. This subsection presents some discussion on the design of these points. Usually ${\mathcal{N}}(\hat{\theta})$ is selected as a
$q$-dimensional hypercube like \eqref{a1} or \eqref{a12}. Specifically, suppose that ${\mathcal{N}}(\hat{\theta})=[L_1,U_1]\times\cdots\times[L_q,U_q]$. For
$\psi_i=(\psi_{i1},\ldots,\psi_{iq})'\in[0,1]^q,\ i=1,\ldots,L$, let $\phi_{ij}=L_j+\psi_{ij}(U_j-L_j),\ i=1,\ldots,L,\ j=1,\ldots,q$, and we have
$\phi_i=(\phi_{i1},\ldots,\phi_{iq})'\in{\mathcal{N}}(\hat{\theta})$ for $i=1,\ldots,L$. Therefore, it suffices to consider the design of $\psi_1,\ldots,\psi_L$ in
$[0,1]^q$, called \emph{initial design} in the following. As mentioned in Section \ref{subsec:is}, the initial design can be constructed from space-filling designs in
$[0,1]^q$. Such designs include grids, Latin hypercube designs (McKay, Beckman, and Conover 1979), and uniform designs (Fang et al. 2000), among others. A simple choice
is the following grid
\begin{equation}\left\{\frac{1}{2U},\ldots,\frac{2U-1}{2U}\right\}\times\cdots\times\left\{\frac{1}{2U},\ldots,\frac{2U-1}{2U}\right\},\label{grid}\end{equation} where
$U$ is a positive integer. There are $L=U^q$ points in the grid, and this leads to unaffordable computations for large $q$. Another choice is the Latin hypercube design
(LHD) (McKay, Beckman, and Conover 1979), which is easy to construct for any $L$ and $q$. The LHD is spaced uniformly in each dimension, and its space-filling properties
over the whole $[0,1]^q$ can be improved by iterative algorithms (Park 2001). There are functions for generating LHDs in both \texttt{MATLAB} and \texttt{R}.

Note that in fact we need to design $\phi_1,\ldots,\phi_L$ in ${\mathcal{N}}(\hat{\theta})\cap\Theta_0$ for LOT or in ${\mathcal{N}}(\hat{\theta})\cap\Theta$ for LOCI.
For irregular or constrained parameter spaces, this problem becomes complicated. A feasible solution is to design more points in ${\mathcal{N}}(\hat{\theta})$ and then
to keep those in the intersection.

\section{Illustrative examples}\label{sec:exm}\hskip\parindent \vspace{-0.8cm}

This section presents four examples to illustrate LOT and LOCI, in which the (asymptotic) distributions of the test statistics or pivotal quantities are non-regular or
unclear.

\subsection{Interval estimation for the maximum cell probability of the multinomial distribution}\label{subsec:pmax} \hskip\parindent \vspace{-0.8cm}

Let $(X_{n1},\ldots,X_{nk})'$ be the cell frequencies from a multinomial distribution, $\mathrm{MN}_k(n;\pi)$, where $\sum_{i=1}^kX_{ni}=n,$ with the parameter
$\pi=(\pi_1,\ldots,\pi_k)',\ \pi_i>0,\ i=1,\ldots,k$, and $\sum_{i=1}^k\pi_i=1$. We consider interval estimation for $\pi_{\max}=\max\{\pi_1,\ldots,\pi_k\}$. This
problem is related to some real applications including the diversity of ecological populations (Patil and Taillie 1979) and favorable numbers on a roulette wheel (Ethier
1982), and has been studied by Gelfand et al. (1992), Glaz and Sison (1999), and Xiong and Li (2009), among others.

The maximum likelihood estimator (MLE) of $\pi$ is $(X_{n1}/n,\ldots,X_{nk}/n)'$, and that of $\pi_{\max}$ is $\max_{1\leqslant i\leqslant k}X_{ni}/n$. To avoid extreme
values in the estimators, we use the Bayesian estimator $\hat{\pi}=(\hat{\pi}_1,\ldots,\hat{\pi}_k)'=\big((X_{n1}+1/2)/(n+k/2),\ldots,(X_{nk}+1/2)/(n+k/2)\big)'$ from
the Jeffrey prior (Ghosh, Delampady, and Samanta 2007),. The corresponding estimator of $\pi_{\max}$ is $\hat{\pi}_{\max}=\max_{1\leqslant i\leqslant k}\hat{\pi}_i$,
whose asymptotic properties are the same as the MLE. Xiong and Li (2009) showed that, when the numbers in $\{i=1,\ldots,k:\ \pi_i=\pi_{\max}\}$ are more than one,
$\hat{\pi}_{\max}$ is not asymptotically normal and the corresponding bootstrap distribution estimator is inconsistent. A remedy is to use $m$-out-of-$n$ bootstrap
(Bickel, G\"{o}tze, and van Zwet 1997). This method takes bootstrap sample $(X_{m1}^*,\ldots,X_{mk}^*)'$ from $\mathrm{MN}_k(m;\hat{\pi})$ with $m=o(n)$, and then
approximates the distribution of $\sqrt{n}(\hat{\pi}_{\max}-\pi_{\max})$ by its bootstrap analogue $\sqrt{m}(\hat{\pi}_{\max}^*-\hat{\pi}_{\max})$, where
$\hat{\pi}_{\max}^*=\max_{1\leqslant i\leqslant k}\big\{(X_{mi}^*+1/2)/(m+k/2)\big\}$. Xiong and Li (2010) proved that this approximation is consistent, and thus results
in asymptotically valid confidence intervals for $\pi_{\max}$.

The LOCI of $\pi_{\max}$ can be easily constructed by the neighborhood bootstrap method in \eqref{cinb}, where the pivotal quantity is $\pi_{\max}-\hat{\pi}_{\max}$. We
next conduct a simulation study to compare the LOCI with the ordinary bootstrap and $m$-out-of-$n$ bootstrap methods. Here we focus on two-sided $1-\alpha$ confidence
intervals with $\alpha=0.05$. In our simulation study, $k$ is fixed as $5$, and $n=30$ and 60 are considered. We use six vectors of cell probabilities; see Table
\ref{tab:mul}. In the $m$-out-of-$n$ bootstrap method, $m$ is set as the integer part of $2\sqrt{n}$. The neighborhood ${\mathcal{N}}(\hat{\pi})$ is
$$\big[\hat{\pi}_1-\delta\log(n)/\sqrt{n},\ \hat{\pi}_1+\delta\log(n)/\sqrt{n}\big]\times\cdots\times
\big[\hat{\pi}_k-\delta\log(n)/\sqrt{n},\ \hat{\pi}_k+\delta\log(n)/\sqrt{n}\big]$$where two values, $0.1$ and $0.5$, of $\delta$ are used. it is clear that
${\mathcal{N}}(\hat{\pi})$ satisfies Assumptions \ref{as:n} and \ref{as:nsz}. We use two grids in \eqref{grid} to design the try points with $U=3$ for $\delta=0.1$ and
$U=5$ for $\delta=0.5$. Note that there is a constraint $\sum_{i=1}^k\pi_i=1$ in the parameter space. There are 51 and 101 try points in the two grids, respectively. The
bootstrap sample size is 5000 in all the above methods.

{\begin{table}[htbp]\begin{center}\scriptsize\caption{\label{tab:mul}Simulation results in Section \ref{subsec:pmax}} \centering\vspace{3mm}
\begin{tabular}{llccccccc} \multicolumn{9}{c}{$\pi=(0.7, 0.075, 0.075, 0.075, 0.075)'$}\\[1mm]\hline &\quad\quad&\multicolumn{3}{c}{$n=30$}&\quad&\multicolumn{3}{c}{$n=60$}
\\[2mm]&&CR&ML&SDL&&CR&ML&SDL\\\hline
\\[-2mm]Bootstrap           &&0.927&0.299&0.023&&0.932&0.222&0.014
\\[1mm]Bootstrap ($m<n$)    &&0.920&0.389&0.054&&0.945&0.377&0.037
\\[1mm]LOCI ($\delta=0.1$)  &&0.950&0.325&0.021&&0.940&0.228&0.013
\\[1mm]LOCI ($\delta=0.5$)  &&0.961&0.345&0.020&&0.954&0.236&0.012
\\\hline \end{tabular}\vspace{3mm}
\begin{tabular}{llccccccc} \multicolumn{9}{c}{$\pi=(0.5, 0.15, 0.15, 0.1, 0.1)'$}\\[1mm]\hline &\quad\quad&\multicolumn{3}{c}{$n=30$}&\quad&\multicolumn{3}{c}{$n=60$}
\\[2mm]&&CR&ML&SDL&&CR&ML&SDL\\\hline
\\[-2mm]Bootstrap           &&0.846&0.297&0.040&&0.912&0.237&0.011
\\[1mm]Bootstrap ($m<n$)    &&0.881&0.368&0.065&&0.955&0.362&0.038
\\[1mm]LOCI ($\delta=0.1$)  &&0.897&0.321&0.033&&0.931&0.244&0.008
\\[1mm]LOCI ($\delta=0.5$)  &&0.967&0.350&0.018&&0.967&0.259&0.008
\\\hline \end{tabular}\vspace{3mm}
\begin{tabular}{llccccccc} \multicolumn{9}{c}{$\pi=(0.3, 0.175, 0.175, 0.175, 0.175)'$}\\[1mm]\hline &\quad\quad&\multicolumn{3}{c}{$n=30$}&\quad&\multicolumn{3}{c}{$n=60$}
\\[2mm]&&CR&ML&SDL&&CR&ML&SDL\\\hline
\\[-2mm]Bootstrap           &&0.738&0.175&0.075&&0.702&0.147&0.058
\\[1mm]Bootstrap ($m<n$)    &&0.748&0.187&0.093&&0.722&0.164&0.090
\\[1mm]LOCI ($\delta=0.1$)  &&0.939&0.210&0.075&&0.832&0.172&0.054
\\[1mm]LOCI ($\delta=0.5$)  &&0.991&0.296&0.046&&0.990&0.217&0.032
\\\hline \end{tabular}\vspace{3mm}
\begin{tabular}{llccccccc} \multicolumn{9}{c}{$\pi=(0.3, 0.3, 0.2, 0.1, 0.1)'$}\\[1mm]\hline &\quad\quad&\multicolumn{3}{c}{$n=30$}&\quad&\multicolumn{3}{c}{$n=60$}
\\[2mm]&&CR&ML&SDL&&CR&ML&SDL\\\hline
\\[-2mm]Bootstrap           &&0.893&0.207&0.064&&0.909&0.168&0.035
\\[1mm]Bootstrap ($m<n$)    &&0.913&0.234&0.081&&0.955&0.210&0.065
\\[1mm]LOCI ($\delta=0.1$)  &&0.954&0.248&0.064&&0.944&0.205&0.036
\\[1mm]LOCI ($\delta=0.5$)  &&0.979&0.327&0.036&&0.966&0.241&0.023
\\\hline \end{tabular}\vspace{3mm}
\begin{tabular}{llccccccc} \multicolumn{9}{c}{$\pi=(0.24, 0.24, 0.24, 0.24, 0.04)'$}\\[1mm]\hline &\quad\quad&\multicolumn{3}{c}{$n=30$}&\quad&\multicolumn{3}{c}{$n=60$}
\\[2mm]&&CR&ML&SDL&&CR&ML&SDL\\\hline
\\[-2mm]Bootstrap           &&0.935&0.170&0.065&&0.924&0.134&0.041
\\[1mm]Bootstrap ($m<n$)    &&0.956&0.184&0.074&&0.986&0.149&0.057
\\[1mm]LOCI ($\delta=0.1$)  &&0.943&0.210&0.064&&0.949&0.174&0.042
\\[1mm]LOCI ($\delta=0.5$)  &&0.976&0.305&0.032&&0.970&0.220&0.024
\\\hline \end{tabular}\vspace{3mm}
\begin{tabular}{llccccccc} \multicolumn{9}{c}{$\pi=(0.2, 0.2, 0.2, 0.2, 0.2)'$}\\[1mm]\hline &\quad\quad&\multicolumn{3}{c}{$n=30$}&\quad&\multicolumn{3}{c}{$n=60$}
\\[2mm]&&CR&ML&SDL&&CR&ML&SDL\\\hline
\\[-2mm]Bootstrap           &&0.906&0.135&0.063&&0.946&0.095&0.049
\\[1mm]Bootstrap ($m<n$)    &&0.982&0.136&0.074&&0.996&0.088&0.059
\\[1mm]LOCI ($\delta=0.1$)  &&0.950&0.175&0.064&&0.963&0.127&0.047
\\[1mm]LOCI ($\delta=0.5$)  &&0.937&0.280&0.038&&0.963&0.195&0.028
\\\hline \end{tabular}
\end{center}\end{table}}

We repeat 5000 times to compute the coverage rates (CRs), mean lengths (MLs), and standard deviations of lengths (SDLs) of the confidence intervals. The simulation
results are shown in Table \ref{tab:mul}. We can see that the bootstrap interval usually has low CR. For dispersed $\pi$, the $m$-out-of-$n$ bootstrap method lacks
efficiency with longer ML, whereas two LOCIs perform better. As expected, the LOCI with $\delta=0.5$ is more conservative than that with $\delta=0.1$. In summary, it can
be concluded that the LOCI is at least comparable to the $m$-out-of-$n$ bootstrap interval.

\subsection{Interval estimation for the location parameter of the three-parameter Weibull distribution}\label{subsec:weibull} \hskip\parindent \vspace{-0.8cm}

The Weibull distribution is widely used in many fields such as survival analysis (Cox and Oakes 1984) and reliability (Murthy, Xie, and Jiang 2004). Let $X_1,\ldots,X_n$
be i.i.d. observations from the Weibull distribution $\mathrm{Wbl}(a,b,\tau)$, whose p.d.f. is
\begin{equation}f(x;a,b,\tau)=\frac{b}{a}\left(\frac{x-\tau}{a}\right)^{b-1}\exp\left[-\left(\frac{x-\tau}{a}\right)^b\right]\label{wbl}\end{equation}
for $x>\tau$, $a>0$, $b>0$, and $\tau\in{\mathbb{R}}$. The parameters $a$, $b$, and $\tau$ are known as the scale, shape, and location parameters, respectively. If
$\tau$ is known, then the likelihood-based inference for the parameters is straightforward (Murthy, Xie, and Jiang 2004). With an unknown $\tau$, the standard method
faces difficulties since the distributions have not a common support (Blischke 1974). Estimation for the parameters of the three-parameter Weibull distribution is still
an active topic in recent years, and many estimators have been proposed; see Lockhart and Stephens (1994), Cousineau (2009), and Teimouri, Hoseini, and Nadarajah (2013),
among others. Since the (asymptotic) distributions of these estimators are difficult to derive, there is limited results on interval estimation for the parameters.

{\begin{table}[htbp]\begin{center}\scriptsize\caption{\label{tab:wei}Simulation results in Section \ref{subsec:weibull}} \centering\vspace{3mm}
\begin{tabular}{llccccccc} \multicolumn{9}{c}{$a=0.5,\ b=0.5$}\\[1mm]\hline &\quad\quad&\multicolumn{3}{c}{$n=10$}&\quad&\multicolumn{3}{c}{$n=20$}
\\[2mm]&&CR&ML&SDL&&CR&ML&SDL\\\hline
\\[-2mm]Bootstrap  &&0.970&0.112&0.075&&0.945&0.023&0.017
\\[1mm]LOCI        &&0.979&0.129&0.190&&0.946&0.024&0.020
\\\hline \end{tabular}\vspace{3mm}
\begin{tabular}{llccccccc} \multicolumn{9}{c}{$a=2.5,\ b=0.5$}\\[1mm]\hline &\quad\quad&\multicolumn{3}{c}{$n=10$}&\quad&\multicolumn{3}{c}{$n=20$}
\\[2mm]&&CR&ML&SDL&&CR&ML&SDL\\\hline
\\[-2mm]Bootstrap  &&0.946&0.336&0.223&&0.940&0.072&0.057
\\[1mm]LOCI        &&0.946&0.343&0.234&&0.940&0.073&0.060
\\\hline \end{tabular}\vspace{3mm}
\begin{tabular}{llccccccc} \multicolumn{9}{c}{$a=0.5,\ b=1.5$}\\[1mm]\hline &\quad\quad&\multicolumn{3}{c}{$n=10$}&\quad&\multicolumn{3}{c}{$n=20$}
\\[2mm]&&CR&ML&SDL&&CR&ML&SDL\\\hline
\\[-2mm]Bootstrap  &&0.410&0.072&0.042&&0.230&0.024&0.012
\\[1mm]LOCI        &&0.949&1.013&0.546&&0.982&0.789&0.316
\\\hline \end{tabular}\vspace{3mm}
\begin{tabular}{llccccccc} \multicolumn{9}{c}{$a=2.5,\ b=1.5$}\\[1mm]\hline &\quad\quad&\multicolumn{3}{c}{$n=10$}&\quad&\multicolumn{3}{c}{$n=20$}
\\[2mm]&&CR&ML&SDL&&CR&ML&SDL\\\hline
\\[-2mm]Bootstrap  &&0.210&0.240&0.133&&0.098&0.055&0.029
\\[1mm]LOCI        &&0.876&1.081&0.644&&0.921&1.166&0.542
\\\hline \end{tabular}\vspace{3mm}
\begin{tabular}{llccccccc} \multicolumn{9}{c}{$a=0.5,\ b=2.5$}\\[1mm]\hline &\quad\quad&\multicolumn{3}{c}{$n=10$}&\quad&\multicolumn{3}{c}{$n=20$}
\\[2mm]&&CR&ML&SDL&&CR&ML&SDL\\\hline
\\[-2mm]Bootstrap  &&0.289&0.104&0.042&&0.121&0.038&0.014
\\[1mm]LOCI        &&0.958&1.454&0.472&&0.972&1.246&0.189
\\\hline \end{tabular}\vspace{3mm}
\begin{tabular}{llccccccc} \multicolumn{9}{c}{$a=2.5,\ b=2.5$}\\[1mm]\hline &\quad\quad&\multicolumn{3}{c}{$n=10$}&\quad&\multicolumn{3}{c}{$n=20$}
\\[2mm]&&CR&ML&SDL&&CR&ML&SDL\\\hline
\\[-2mm]Bootstrap  &&0.014&0.221&0.068&&0.010&0.038&0.020
\\[1mm]LOCI        &&0.881&1.392&0.671&&0.950&1.614&0.379
\\\hline \end{tabular}\end{center}\end{table}}

This subsection constructs LOCIs for $\tau$ based on the maximum product of spacings (MPS) estimation (Cheng and Amin 1983). Obviously our method is also applicable for
other parameters. The MPS estimators $\hat{a}$, $\hat{b}$, and $\hat{\tau}$ are constructed by maximizing
$$S(a,b,\tau)=\prod_{i=1}^{n+1}\int_{X_{(i-1)}}^{X_{(i)}}f(x;a,b,\tau)dx,$$where $X_{(1)}\leqslant\cdots\leqslant X_{(n)}$ are order statistics, $X_{(0)}=\tau$, and
$X_{(n+1)}=\infty$. For all $a,b$, and $\tau$, the MPS estimators are consistent (Cheng and Amin 1983). Furthermore, for $b>2$, they have the same asymptotic
distributions as the MLEs; if $0<b<2$, then $\hat{a}-a=O_p(1/\sqrt{n})$, $\hat{b}-b=O_p(1/\sqrt{n})$, and $\hat{\tau}-\tau=O_p(1/n^{1/b})$. It is not straightforward to
construct confidence intervals of $\tau$ by the asymptotic properties of $\hat{\tau}$ since $b$ is unknown. Furthermore, the validity of the corresponding bootstrap
confidence interval is unclear.

We use neighborhood bootstrap to construct two-sided $1-\alpha$ confidence intervals of $\tau$, and conduct a simulation study to evaluate their performance. The pivotal
quantity is $\tau-\hat{\tau}$. The initial design is the grid in \eqref{grid} with $U=3$ that corresponds to $L=27$. Since the results are sensitive to the value of $b$,
we set the neighborhood ${\mathcal{N}}(\hat{a},\hat{b},\hat{\tau})$ as
$$\big[\hat{a}-\delta_n,\ \hat{a}+\delta_n\big]\times\big[\hat{b}-\delta_n,\ \hat{b}+\delta_n\big]
\times\big[\hat{\tau}-\delta_n,\ \hat{\tau}+\delta_n\big],$$where $\delta_n=4\exp\big(-(1/\hat{b})^5\big)\log(n)/\sqrt{n}$. It is clear that
${\mathcal{N}}(\hat{a},\hat{b},\hat{\tau})$ satisfies Assumptions \ref{as:n} and \ref{as:nsz} for all $a,\ b$, and $\tau$ by the asymptotic properties of the MPSs. For
$\tau=1$, two values of $n$, and several combinations of $(a,b)$, the simulation results based on 1000 repetitions are reported in Table \ref{tab:wei} with
$\alpha=0.05$.  The bootstrap sample sizes used in the bootstrap interval and LOCI are both 1000. We can see that, for $b=0.5$, the CR of the bootstrap interval is
satisfactory, and the LOCI has similar performance to it with slightly longer ML. For larger $b$, the bootstrap interval performs poorly, and the LOCI is much better in
terms of CR.

\subsection{Testing whether all the coefficients in the high-dimensional regression are nonnegative}\label{subsec:nnlasso} \hskip\parindent \vspace{-0.8cm}

High-dimensional data analysis that deals with models where the number of parameters is larger than the sample size is a very active research area in recent years, We
consider the regression model\begin{equation}\label{lm}y=X\beta+\varepsilon,\end{equation} where ${X}=(x_{ij})$ is the $n\times p$ regression matrix,
${y}=(y_1,\ldots,y_n)'$ is the response vector, ${\beta}=(\beta_1,\ldots,\beta_p)'$ is the vector of regression coefficients and
${\varepsilon}=(\varepsilon_1,\ldots,\varepsilon_n)'$ is a vector of i.i.d. normal random errors with zero mean and finite variance $\sigma^2$. Let $p_0$ denote the
number in $\{j=1,\ldots,p:\ \beta_j\neq0\}$. For $p\gg n$, we make the sparsity assumption of $p_0\ll n$. Many methods have been proposed to estimate the sparse $\beta$
in \eqref{lm} such as the lasso (Tibshirani 1996), the smoothly clipped absolute deviation method (Fan and Li 2001), and the minimax concave penalty method (Zhang 2010).
Under the assumption that all the coefficients are known to be nonnegative, Efron et al. (2004) introduced a nonnegative lasso method to estimate $\beta$, which solves
\begin{equation}\min_\beta\|y-X\beta\|^2+\lambda\sum_{j=1}^p\beta_j\quad\text{subject to}\ \beta_j\geqslant0,\ j=1,\ldots,p,\label{nnlasso}\end{equation}
where $\lambda>0$ is a tuning parameter. Applications of this method can be found in Frank and Heiser (2006) and Wu, Yang, and Liu (2014). In this subsection we use the
data to test whether the assumption in the nonnegative lasso method is reasonable, i.e., test the following hypotheses
\begin{equation}H_0:\ \beta_j\geqslant0,\ j=1,\ldots,p\ \leftrightarrow\ H_1:\ H_0\ \text{does not hold}.\label{H0}\end{equation}

In classical $n>p$ settings, the problem to test \eqref{H0} has been discussed by the likelihood ratio test; see Silvapulle and Sen (2011). However, this method cannot
be dirrectly extended to the high-dimensional case since the MLEs perform very poorly for such a case. Here we borrow the idea of the generalized likelihood ratio test
in nonparametric statistics (Fan, Zhang, and Zhang 2001), and construct the test
statistic\begin{equation*}T=\frac{\|y-X\hat{\beta}_{H_0}\|^2}{\|y-X\hat{\beta}_{H_1}\|^2},\end{equation*}where $\hat{\beta}_{H_0}$ and $\hat{\beta}_{H_1}$ are the
estimators of $\beta$ under $H_0$ and $H_1$, respectively. A natural choice is to use the nonnegative lasso estimator in \eqref{nnlasso} and the lasso estimator as
$\hat{\beta}_{H_0}$ and $\hat{\beta}_{H_1}$, respectively. Since the distribution of $T$ under $H_0$ is unclear, we use LOT to test \eqref{H0}.

First of all we need to estimate all the unknown parameters under $H_0$. Wu, Yang, and Liu (2014) showed that the nonnegative lasso estimator in \eqref{nnlasso} is
consistent under $H_0$. By Fan, Guo, and Hao (2012), a consistent estimator of $\sigma^2$ is $\hat{\sigma}^2=\|y-X\hat{\beta}_{LS}\|^2/n$, where $\hat{\beta}_{LS}$ is
the ordinary least squares estimator of $\beta$ under the submodel selected by the nonnegative lasso. Since $p$ is large, the neighborhood
${\mathcal{N}}(\hat{\beta}_{H_0},\hat{\sigma}^2)$ should be selected elaborately to avoid high-dimensional optimization. We select
${\mathcal{N}}(\hat{\beta}_{H_0},\hat{\sigma}^2)$ as
\begin{equation}{\mathcal{N}}(\hat{\beta}_{H_0,1})\times\cdots\times{\mathcal{N}}(\hat{\beta}_{H_0,p})
\times{\mathcal{N}}(\hat{\sigma}^2),\label{nbs}\end{equation}where $\hat{\beta}_{H_0,j}$'s are components of $\hat{\beta}_{H_0}$,
${\mathcal{N}}(\hat{\beta}_{H_0,j})=\{0\}$ for $\hat{\beta}_{H_0,j}=0$ and ${\mathcal{N}}(\hat{\beta}_{H_0,j})=\big[\hat{\beta}_{H_0,j}-\delta\hat{\sigma},\
\hat{\beta}_{H_0,j}+\delta\hat{\sigma}\big]$ otherwise, ${\mathcal{N}}(\hat{\sigma}^2)=[\hat{\sigma}^2-\delta,\ \hat{\sigma}^2+\delta]$, and $\delta>0$ is a constant. By
the importance sampling-based approach in Section \ref{subsec:is}, the $p$-value of the LOT for \eqref{H0} is given by \eqref{otpa}. Note that the asymptotic results in
Section \ref{sec:ap} cannot be dirrectly applied for diverging $p$. However, it is not hard to show that, if $H_0$ holds, then
$\P\big((\beta,\sigma^2)\in{\mathcal{N}}(\hat{\beta}_{H_0},\hat{\sigma}^2)\big)\to1$ as $n\to\infty$ under regularity conditions by selection consistency properties of
the nonnegative lasso (Wu, Yang, and Liu 2014). Therefore, similar to Theorem \ref{th: acr}, the asymptotic frequentist property of the LOT can be guaranteed.

We conduct a simulation study to compare the above LOT and the bootstrap test whose $p$-value is given in \eqref{bp}. In the simulation the rows of $X$ in \eqref{lm} are
i.i.d. from a multivariate normal distribution $N({0},{\Sigma})$ whose covariance matrix ${\Sigma}=(\sigma_{ij})_{p\times p}$ has entries $\sigma_{ii}=1,\ i=1,\ldots,p$
and $\sigma_{ij}=0.1,\ i\neq j$. The random errors $\varepsilon_1,\ldots,\varepsilon_n$ i.i.d. $\sim N(0,1)$. We use three configurations of $n$ and $p$,
$(n,p)=(20,40)$, $(n,p)=(40,80)$, and $(n,p)=(60,120)$. We take the tuning parameter $\lambda=4\sqrt{\log(p)/n}$ in the lasso and nonnegative lasso estimator recommended
by Wu, Yang, and Liu (2014). In the LOT, $\delta$ is set as 0.03 in \eqref{nbs}, and we compute the $p$-value in \eqref{nbpa} with 30 try points. Here the initial design
of the try points is an LHD, whose dimension is the number of non-zero $\hat{\beta}_{H_0,j}$'s; see \eqref{nbs}. In the two methods, the bootstrap sample sizes are both
2000. The significance levels $\alpha=0.05$ and $\alpha=0.1$ are considered.

{\begin{table}[htbp]\begin{center}\scriptsize\caption{\label{tab:nnlasso}Type I errors in Section \ref{subsec:nnlasso}} \centering\vspace{3mm} \vspace{3mm}
\begin{tabular}{llccccccccc} \multicolumn{11}{c}{$n=20,\ p=40$}\\[1mm]\hline &\quad\quad&\multicolumn{4}{c}{$\alpha=0.05$}&\quad&\multicolumn{4}{c}{$\alpha=0.1$}
\\[2mm]&&(i)&(ii)&(iii)&(iv)&&(i)&(ii)&(iii)&(iv)\\\hline
\\[-2mm]Bootstrap  &&0.084&0.094&0.096&0.092&&0.176&0.199&0.186&0.192
\\[1mm]LOT         &&0.048&0.056&0.056&0.050&&0.099&0.119&0.108&0.122
\\\hline \end{tabular}\vspace{3mm}
\begin{tabular}{llccccccccc} \multicolumn{11}{c}{$n=40,\ p=80$}\\[1mm]\hline &\quad\quad&\multicolumn{4}{c}{$\alpha=0.05$}&\quad&\multicolumn{4}{c}{$\alpha=0.1$}
\\[2mm]&&(i)&(ii)&(iii)&(iv)&&(i)&(ii)&(iii)&(iv)\\\hline
\\[-2mm]Bootstrap  &&0.134&0.172&0.160&0.168&&0.248&0.302&0.282&0.308
\\[1mm]LOT         &&0.052&0.062&0.062&0.060&&0.110&0.126&0.116&0.126
\\\hline \end{tabular}\vspace{3mm}
\begin{tabular}{llccccccccc} \multicolumn{11}{c}{$n=60,\ p=120$}\\[1mm]\hline &\quad\quad&\multicolumn{4}{c}{$\alpha=0.05$}&\quad&\multicolumn{4}{c}{$\alpha=0.1$}
\\[2mm]&&(i)&(ii)&(iii)&(iv)&&(i)&(ii)&(iii)&(iv)\\\hline
\\[-2mm]Bootstrap  &&0.206&0.216&0.194&0.239&&0.372&0.378&0.362&0.376
\\[1mm]LOT         &&0.066&0.054&0.060&0.051&&0.128&0.126&0.100&0.136
\\\hline \end{tabular}\end{center}\end{table}}

\begin{figure}[htbp]\begin{center}\scalebox{0.6}[0.6]{\includegraphics{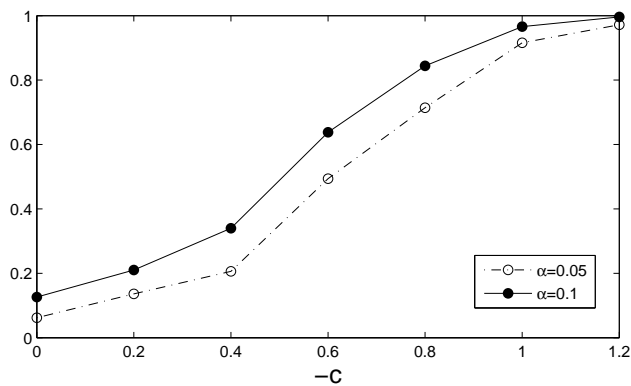}}\end{center}
\caption{Powers of the LOT in Section \ref{subsec:nnlasso} ($n=40,p=80$).}\label{fig:power}
\end{figure}

Four vectors of the coefficients under $H_0$ are used: (i)\ $\beta_1=\cdots=\beta_p=0$; (ii)\ $\beta_1=2$ and $\beta_j=0$ for other $j$; (iii)\ $\beta_1=\beta_2=2$ and
$\beta_j=0$ for other $j$; (iv)\ $\beta_1=\beta_2=\beta_3=2$ and $\beta_j=0$ for other $j$. To compute the power, we consider $\beta_1=2,\ \beta_2=c<0$, and $\beta_j=0$
for other $j$. For each model, we simulate 2000 data sets, and report the Type I errors and powers in Table \ref{tab:nnlasso} and Figure \ref{fig:power}, respectively.
It can be seen that the bootstrap test cannot control Type I error well, and that the LOT has reasonable performance in terms of Type I error and power. The power
performance of the LOT is similar for other parameter configurations.

\subsection{Interval estimation for the minimum of an unknown function}\label{subsec:nr} \hskip\parindent \vspace{-0.8cm}

Consider the nonparametric regression model \begin{equation}y=r(x)+\varepsilon,\label{nr}\end{equation}where $r$ is a continuous function defined on $[0,1]$ and
$\varepsilon\sim N(0,\sigma^2)$ is the random error. For given $x_1,\ldots,x_n\in[0,1]$, the responses are denoted by $y_1,\ldots,y_n$, respectively, where
$y_i=r(x_i)+\varepsilon_i$ and $\varepsilon_1,\ldots,\varepsilon_n$ are independent. Assume that $r$ has a unique minimum $\xi$ in $[0,1]$, i.e., $r(\xi)<r(x)$ for all
$x\in[0,1]$ with $x\neq \xi$. We are interested in constructing confidence intervals of $\xi=\xi(r)$. Without the random error, some related problems have been discussed
in the literature; see de Haan (1981) and de Carvalho (2011), among others. However, to the best of the author's knowledge, there is no result on interval estimation for
$\xi$ in the regression setting.

In model \eqref{nr}, the unknown parameter $r$ lies in an infinite-dimensional space. We shall show that, with a fixed design for $x_1,\ldots,x_n$, the problem of
construction confidence intervals for $\xi$ can be simplified to a finite-dimensional problem, and thus can be solved by the approaches in Section \ref{sec:imp}. Here we
only focus on the upper $1-\alpha$ confidence interval for $\alpha\in(0,1)$. Let $\hat{r}$ and $\hat{\sigma}$ be estimators of $r$ and $\sigma$. We use
$\hat{\xi}(\X)=\arg\min_{x\in[0,1]}\hat{r}(x)$ as an estimator of $\xi$ with $\X=(y_1,\ldots,y_n)'$, and consider the pivotal quantity $\xi(r)-\hat{\xi}(\X)$ with the
c.d.f. $H_{(r,\sigma)}(x)=\P\big(\xi(r)-\hat{\xi}(\X)\leqslant x\big)$. For $a\in[0,1],\ b_1,\ldots,b_n\in{\mathbb{R}}$, and $c>0$, let $\X^*=(y_1^*,\ldots,y_n^*)$
denote the set of independent random variables $y_i^*\sim N(b_i, c^2)$ for $i=1,\ldots,n$. Let $\theta=(a,b_1,\ldots,b_n,c)'$ and
$\tilde{H}_{\theta}(x)=\P\big(a-\hat{\xi}(\X^*)\leqslant x\big)$. Denote by $\mathcal{N}_n(\hat{\theta})\subset{\mathbb{R}}^{n+2}$ a neighborhood of
$\hat{\theta}=\big(\hat{\xi},\hat{r}(x_1),\ldots,\hat{r}(x_n),\hat{\sigma}\big)'$. Since $H_{(r,\sigma)}(x)=\tilde{H}_{(\xi(r),\, r(x_1),\ldots,r(x_n),\,\sigma)'}(x)$,
the following proposition is straightforward.

\begin{figure}[htbp]\begin{center}\scalebox{1}[1]{\includegraphics{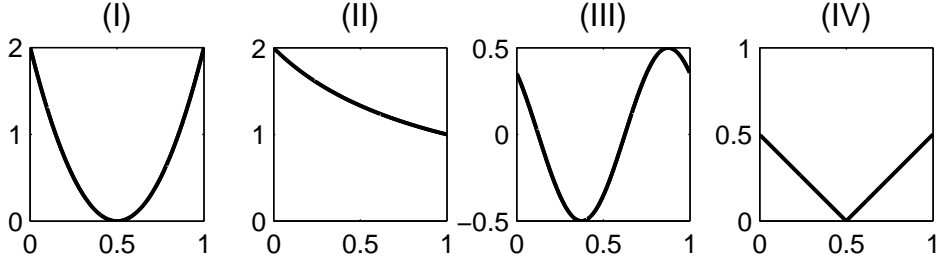}}\end{center} \caption{Four regression functions in simulations in Section
\ref{subsec:nr}.}\label{fig:funs}
\end{figure}

\begin{proposition}\label{prop: nr} For all $r$ and $\sigma$, if $\P\big((\xi,r(x_1),\ldots,r(x_n),\sigma)'\in\mathcal{N}_n(\hat{\theta})\big)\to1$, then
\begin{eqnarray*}\liminf_{n\to\infty}\P\left(\xi\leqslant\hat{\xi}+\sup_{\phi\in\mathcal{N}_n(\hat{\theta})}\tilde{H}_\phi^{-1}(1-\alpha)\right)
\geqslant 1-\alpha.\end{eqnarray*}\end{proposition}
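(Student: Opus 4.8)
The plan is to reduce the statement to Theorem \ref{th: acr} by regarding the finite-dimensional vector $\theta=(a,b_1,\ldots,b_n,c)'$, ranging over $[0,1]\times{\mathbb{R}}^n\times(0,\infty)$, as the unknown parameter, with $\tilde{H}_\theta$ playing the role of $H_\phi$ and $\hat{\theta}=\big(\hat{\xi},\hat{r}(x_1),\ldots,\hat{r}(x_n),\hat{\sigma}\big)'$ as its estimator. The point already recorded before the statement is the identity $H_{(r,\sigma)}=\tilde{H}_{\theta_0}$ at the true value $\theta_0=(\xi,r(x_1),\ldots,r(x_n),\sigma)'$: when $b_i=r(x_i)$ and $c=\sigma$ the synthetic sample $\X^*$ with $y_i^*\sim N(b_i,c^2)$ has exactly the distribution of $\X=(y_1,\ldots,y_n)'$, so $\tilde{H}_{\theta_0}$ is the genuine (unconditional) c.d.f. of the pivot $\xi-\hat{\xi}$. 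Hence the hypothesis $\P\big(\theta_0\in\mathcal{N}_n(\hat{\theta})\big)\to1$ is precisely Assumption \ref{as:n} for this reparametrization, and the conclusion is the corresponding instance of \eqref{th1}.

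Concretely, I would write $A_n=\{\theta_0\in\mathcal{N}_n(\hat{\theta})\}$. On $A_n$ the true value $\theta_0$ competes in the supremum, so $\sup_{\phi\in\mathcal{N}_n(\hat{\theta})}\tilde{H}_\phi^{-1}(1-\alpha)\geqslant\tilde{H}_{\theta_0}^{-1}(1-\alpha)$, whence $\{\xi-\hat{\xi}\leqslant\tilde{H}_{\theta_0}^{-1}(1-\alpha)\}\cap A_n\subseteq\{\xi\leqslant\hat{\xi}+\sup_{\phi\in\mathcal{N}_n(\hat{\theta})}\tilde{H}_\phi^{-1}(1-\alpha)\}$. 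Therefore
\begin{eqnarray*}\P\left(\xi\leqslant\hat{\xi}+\sup_{\phi\in\mathcal{N}_n(\hat{\theta})}\tilde{H}_\phi^{-1}(1-\alpha)\right)&\geqslant&\P\left(\xi-\hat{\xi}\leqslant\tilde{H}_{\theta_0}^{-1}(1-\alpha)\right)-\P(A_n^c)\\&=&H_{(r,\sigma)}\left(\tilde{H}_{\theta_0}^{-1}(1-\alpha)\right)-\P(A_n^c).\end{eqnarray*}
Since $H_{(r,\sigma)}=\tilde{H}_{\theta_0}$ and $\tilde{H}_{\theta_0}^{-1}$ is the generalized inverse defined in \eqref{finv}, right-continuity of a c.d.f. gives $\tilde{H}_{\theta_0}\big(\tilde{H}_{\theta_0}^{-1}(1-\alpha)\big)\geqslant1-\alpha$, exactly the step used in the chain \eqref{uls}. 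Letting $n\to\infty$ and using $\P(A_n^c)\to0$ then yields $\liminf_{n\to\infty}\P(\cdots)\geqslant1-\alpha$.

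I do not expect a serious obstacle here: this is the argument underlying Theorem \ref{th: acr}, and it is insensitive to the fact that the working parameter has dimension $n+2$ growing with the sample size, because it only invokes the identity $\tilde{H}_{\theta_0}=H_{(r,\sigma)}$, the inverse-c.d.f. inequality, and the probability-tending-to-one hypothesis, none of which needs the parameter space to be fixed. The only points worth a sentence of care are (i) measurability of $\hat{\xi}(\X^*)=\arg\min_{x\in[0,1]}\hat{r}(x)$, so that $\tilde{H}_\phi$ is a well-defined c.d.f. (implicit, as $\hat{r}$ is built from $\X^*$), and (ii) that the supremum over the random set $\mathcal{N}_n(\hat{\theta})$ need not be attained, which is why one bounds it below by its value at the single point $\theta_0$ rather than passing through a maximizer. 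Alternatively, after the reparametrization one may simply cite Theorem \ref{th: acr} verbatim.
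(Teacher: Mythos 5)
Your proposal is correct and follows exactly the route the paper intends: it exploits the identity $H_{(r,\sigma)}=\tilde{H}_{(\xi(r),\,r(x_1),\ldots,r(x_n),\,\sigma)'}$ to treat $(\xi,r(x_1),\ldots,r(x_n),\sigma)'$ as the finite-dimensional parameter and then repeats the argument of Proposition~\ref{prop: ci}/Theorem~\ref{th: acr} (bound the supremum below by its value at the true point on the event that it lies in $\mathcal{N}_n(\hat{\theta})$, apply the inverse-c.d.f. inequality, and use that this event has probability tending to one), which is precisely why the paper calls the proposition straightforward.
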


\begin{table}[h]\begin{center}\scriptsize\caption{\label{tab:nr}Simulation results in Section \ref{subsec:nr}} \centering\vspace{3mm}
\begin{tabular}{llccccccc} \multicolumn{9}{c}{(I)}\\[1mm]\hline &\quad\quad&\multicolumn{3}{c}{$n=20$}&\quad&\multicolumn{3}{c}{$n=30$}
\\[2mm]&&CR&ML&SDL&&CR&ML&SDL\\\hline
\\[-2mm]Bootstrap  &&0.635&0.264&0.066&&0.643&0.243&0.060
\\[1mm]LOCI        &&0.933&0.487&0.062&&0.948&0.457&0.055
\\\hline \end{tabular}\vspace{3mm}
\begin{tabular}{llccccccc} \multicolumn{9}{c}{(II)}\\[1mm]\hline &\quad\quad&\multicolumn{3}{c}{$n=20$}&\quad&\multicolumn{3}{c}{$n=30$}
\\[2mm]&&CR&ML&SDL&&CR&ML&SDL\\\hline
\\[-2mm]Bootstrap  &&0.749&0.189&0.176&&0.766&0.161&0.159
\\[1mm]LOCI        &&0.935&0.281&0.220&&0.950&0.251&0.199
\\\hline \end{tabular}\vspace{3mm}
\begin{tabular}{llccccccc} \multicolumn{9}{c}{(III)}\\[1mm]\hline &\quad\quad&\multicolumn{3}{c}{$n=20$}&\quad&\multicolumn{3}{c}{$n=30$}
\\[2mm]&&CR&ML&SDL&&CR&ML&SDL\\\hline
\\[-2mm]Bootstrap  &&0.710&0.272&0.091&&0.656&0.233&0.069
\\[1mm]LOCI        &&0.973&0.573&0.127&&0.963&0.490&0.095
\\\hline \end{tabular}\vspace{3mm}
\begin{tabular}{llccccccc} \multicolumn{9}{c}{(IV)}\\[1mm]\hline &\quad\quad&\multicolumn{3}{c}{$n=20$}&\quad&\multicolumn{3}{c}{$n=30$}
\\[2mm]&&CR&ML&SDL&&CR&ML&SDL\\\hline
\\[-2mm]Bootstrap  &&0.721&0.486&0.203&&0.721&0.446&0.175
\\[1mm]LOCI        &&0.927&0.794&0.192&&0.957&0.787&0.161
\\\hline \end{tabular}\end{center}\end{table}

By Proposition \ref{prop: nr}, we can obtain LOCIs of $\xi$ which have the asymptotic frenquentist properties through optimizing the quantiles of $\tilde{H}_\phi$ over a
local region. In the following, let $\hat{r}$ be the Nadaraya-Watson estimator with kernel function $K$ and bandwidth $h$ (Hart 1997). Under regularity conditions,
$\sup_{x\in[0,1]}|\hat{r}(x)-r(x)|\to0$ in probability (H\"{a}rdle and Luckhaus 1984), which implies that $\hat{\xi}$ is a consistent estimator of $\xi$. Additionally, a
consistent estimator $\hat{\sigma}^2$ of $\sigma^2$ can be given from the residual sum of squares of $\hat{r}$. A choice of $\mathcal{N}_n(\hat{\theta})$ satisfying the
condition in Proposition \ref{prop: nr} is\begin{equation}\big[\hat{\xi}-\delta\hat{\sigma},\
\hat{\xi}+\delta\hat{\sigma}\big]\times\big[\hat{r}(x_1)-\delta\hat{\sigma},\
\hat{r}(x_1)+\delta\hat{\sigma}\big]\times\cdots\times\big[\hat{r}(x_n)-\delta\hat{\sigma},\ \hat{r}(x_n)+\delta\hat{\sigma}\big]\times\big[\hat{\sigma}-\delta,\
\hat{\sigma}+\delta\big],\label{Nnr}\end{equation}where $\delta>0$ is a constant.

We next conduct a simulation study to compare the bootstrap two-sided $1-\alpha$ confidence intervals and LOCIs with $\alpha=0.05$. Four regression functions in
\eqref{nr} are considered:
\begin{eqnarray*}&&\text{(I)}:\ r(x)=2(2x-1)^2;\ \ \text{(II)}:\ r(x)=2/(x+1);\\&&\text{(III)}:\ r(x)=\sin(2\pi x+3\pi/4)/2;
\ \ \text{(IV)}:\ r(x)=|x-1/2|;\end{eqnarray*}see Figure \ref{fig:funs}. For these functions, the values of $\xi$ are $1/2,\ 1,\ 3/8$, and $1/2$, respectively. We fix
$\sigma^2=1/4$, and $x_i=(2i-1)/(2n)$ for $i=1,\ldots,n$. The kernel function $K$ in $\hat{r}$ is the Epanechnikov kernel, and the bandwidth $h$ is set as $n^{-1/5}/5$.
In LOCIs, we use $\delta=0.25$ in \eqref{Nnr}, and take 60-run LHDs as the initial designs of try points for implementing neighborhood bootstrap. The bootstrap sample
size is 5000. Based on 5000 repetitions, we report the simulation results in Table \ref{tab:nr}. It can be seen that the bootstrap method performs poorly in terms of CR,
and that the LOCI is much better for all the cases.

\section{Discussion}\label{sec:diss}
\hskip\parindent \vspace{-0.8cm}

In this paper we have introduced local optimization-based inference including LOT and LOCI. The main advantage of our approach is that, unlike current frequentist
approach, it does not require hard work in deriving (asymptotic) distributions since its asymptotic frequentist properties hold as long as we have consistent estimators
of the unknown parameters. The implementation of our approach is based on standard computational methods such as importance sampling and Monte Carlo, which are easy to
master for practitioners. Local optimization-based inference can be viewed as an extended bootstrap that complements current frequentist inference. It can fast provide
frequentist solutions to complex problems in practice, and has broadly potential applications. Illustrative examples have shown these to some extent. Although local
optimization-based inference does not overshoot for regular problems (see Theorem \ref{th:aeb}), it is more suitable for non-regular problems in which the theoretical
derivation is difficult.

We give a further discussion on the specification of the neighborhood ${\mathcal{N}}(\hat{\theta})$ here. Generally speaking, the choice of ${\mathcal{N}}(\hat{\theta})$
is flexible; see Section \ref{sec:exm}. In real applications, for a dataset with fixed sample size $n$, it is not hard to find a proper ${\mathcal{N}}(\hat{\theta})$
that guarantees that LOT or LOCI has satisfactory performance via empirical evaluations. Besides the methods in Section \ref{subsec:N}, we can also use informative
priors, if any, to inform the construction of ${\mathcal{N}}(\hat{\theta})$. This provides a way to associate our approach with Bayesian statistics, and is valuable to
study in the future. A related problem to the specification of ${\mathcal{N}}(\hat{\theta})$ is that it is difficult to get the exact solution or to know how close an
approximate one to it even for a small ${\mathcal{N}}(\hat{\theta})$. This problem is not very serious in practice since our terminal is inference instead of
optimization. Simulation results in Section \ref{sec:exm} show that the design-based approximation with a moderate $L$ yields satisfactory finite-sample performance of
LOT and LOCI even for high-dimensional ${\mathcal{N}}(\hat{\theta})$. In fact, when bootstrap gives aggressive results, local optimization-based inference can always
improve its performance, even with a relatively poor optimization algorithm, since the corresponding optimization problem possesses a better solution principle (Xiong
2014): a better approximation to the exact solution yields less Type I error or higher coverage rate.

A disadvantage of local optimization-based inference is its computational cost. This can be viewed as the price of generality. We can replace the Monte Carlo method in
the implementation with LHD sampling or quasi Monte Carlo to improve the computational efficiency (Homem-de-Mello 2008). Iterative algorithms such as stochastic
approximation (Kushner and Yin 1997) are also available to solve the stochastic programming problem in \eqref{otp}. Another future topic is to apply the proposed
approach to general infinite-dimensional problems, which call for infinite-dimensional optimization techniques. In the neighborhood bootstrap method, we need to develop
new space-filling designs in infinite-dimensional spaces.

\section*{Appendix: Asymptotic properties of the design-based algorithm}\hskip\parindent \vspace{-0.8cm}

As mentioned in Section \ref{sec:imp}, $\max_{l=1,\ldots,L_n}H_{\phi_l}^{-1}(1-\alpha)$ can be used to approximate the upper limit
$\sup_{\phi\in\mathcal{N}_n(\hat{\theta})}H_\phi^{-1}(1-\alpha)$ in LOCI, where $\{\phi_1,\ldots,\phi_{L_n}\}$ is a dense subset of $\mathcal{N}_n(\hat{\theta})$. We
next prove frequentist properties of this approximation. These results are less important in practice since we can obtain an approximation as accurate as possible with a
powerful computer. We place them here because they may be still of interest in theory.

\begin{assumption}\label{as:dz} Let $\{a_n\}$ be a series of positive numbers. Denote
$\bar{H}_{\phi}(x)=\P\big(a_n[\xi(\phi)-\hat{\xi}(\X^*)]\leqslant x|\,\X\big)$, where $\X^*$ is drawn from $F(\cdot,\phi)$ given $\X$. As $n\to\infty$,
$$\max_{\phi\in\mathcal{N}_n(\hat{\theta})}\min_{l=1,\ldots,L_n} \big|\bar{H}_\phi^{-1}(1-\alpha)-\bar{H}_{\phi_l}^{-1}(1-\alpha)\big|=o_p(1).$$\end{assumption}

\begin{assumption}\label{as:ic}As $n\to\infty$ and $\delta\to0$, $\bar{H}_\theta\big(\bar{H}_\theta^{-1}(1-\alpha)-\delta\big)\to1-\alpha$.\end{assumption}

Note that the limits of $\bar{H}_{\varphi_1}$ and $\bar{H}_{\varphi_2}$ can be different for $\varphi_1,\ \varphi_2\in\mathcal{N}_n(\hat{\theta})$. Assumption
\ref{as:dz} requires that $\{\phi_1,\ldots,\phi_{L_n}\}$ should be dense enough so that any value of $\bar{H}_\phi^{-1}(1-\alpha)$ can be approximated accurately by some
element in $\{\bar{H}_{\phi_l}^{-1}(1-\alpha)\}_{l=1,\ldots,L_n}$. Assumption \ref{as:dz} holds under Assumptions \ref{as:nsz}-\ref{as:bc}, and relates to some
space-filling criterion (Johnson, Moore, and Ylvisaker 1990). Assumption \ref{as:ic} says that $\bar{H}_\theta$ is asymptotically continuous at
$\bar{H}_\theta^{-1}(1-\alpha)$. Under Assumption \ref{as:bc} (i), Assumption \ref{as:ic} holds.

\begin{theorem}\label{th:acr2}Under Assumptions \ref{as:n}, \ref{as:dz}, and \ref{as:ic},
\begin{eqnarray*}\liminf_{n\to\infty}\P\left(\xi\leqslant\hat{\xi}+\max_{l=1,\ldots,L_n}H_{\phi_l}^{-1}(1-\alpha)\right)
\geqslant 1-\alpha.\end{eqnarray*}\end{theorem}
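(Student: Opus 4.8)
The plan is to adapt the argument behind Theorem~\ref{th: acr}, transferring everything to the rescaled c.d.f.'s of Assumptions~\ref{as:dz} and \ref{as:ic}. Since each $a_n>0$, the generalized inverse in \eqref{finv} obeys $\bar{H}_\phi^{-1}(1-\alpha)=a_nH_\phi^{-1}(1-\alpha)$ for every $\phi$, because $\bar{H}_\phi(x)=H_\phi(x/a_n)$. Hence the event to be controlled can be rewritten as
\begin{equation*}\left\{\xi\leqslant\hat{\xi}+\max_{l=1,\ldots,L_n}H_{\phi_l}^{-1}(1-\alpha)\right\}=\left\{a_n(\xi-\hat{\xi})\leqslant M_n\right\},\qquad M_n:=\max_{l=1,\ldots,L_n}\bar{H}_{\phi_l}^{-1}(1-\alpha),\end{equation*}
so it suffices to lower-bound $\P(a_n(\xi-\hat{\xi})\leqslant M_n)$. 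Recall also that at the true parameter $\bar{H}_\theta$ is just the (unconditional) c.d.f. of $a_n(\xi-\hat{\xi})$, exactly as used in the proof of Theorem~\ref{th:aeb}.

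First I would lower-bound $M_n$ on a high-probability event. Set $\zeta_n:=\max_{\phi\in\mathcal{N}_n(\hat{\theta})}\min_{l=1,\ldots,L_n}\big|\bar{H}_\phi^{-1}(1-\alpha)-\bar{H}_{\phi_l}^{-1}(1-\alpha)\big|$, which is $o_p(1)$ by Assumption~\ref{as:dz}. On the event $\{\theta\in\mathcal{N}_n(\hat{\theta})\}$ the choice $\phi=\theta$ is admissible in the outer maximum, so for the index $l^*$ achieving the inner minimum at $\phi=\theta$ we get $\bar{H}_{\phi_{l^*}}^{-1}(1-\alpha)\geqslant\bar{H}_\theta^{-1}(1-\alpha)-\zeta_n$, hence $M_n\geqslant\bar{H}_\theta^{-1}(1-\alpha)-\zeta_n$. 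Fixing $\delta>0$ and intersecting with $\{\zeta_n\leqslant\delta\}$ (probability tending to one) yields $M_n\geqslant\bar{H}_\theta^{-1}(1-\alpha)-\delta$ on $\{\theta\in\mathcal{N}_n(\hat{\theta})\}\cap\{\zeta_n\leqslant\delta\}$.

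Next I would assemble the estimate by a union bound: for each fixed $\delta>0$,
\begin{equation*}\P\big(a_n(\xi-\hat{\xi})\leqslant M_n\big)\geqslant\P\big(a_n(\xi-\hat{\xi})\leqslant\bar{H}_\theta^{-1}(1-\alpha)-\delta\big)-\P\big(\theta\notin\mathcal{N}_n(\hat{\theta})\big)-\P(\zeta_n>\delta),\end{equation*}
where the first term on the right equals $\bar{H}_\theta\big(\bar{H}_\theta^{-1}(1-\alpha)-\delta\big)$. Taking $\liminf_{n\to\infty}$ and killing the last two terms via Assumptions~\ref{as:n} and \ref{as:dz} leaves $\liminf_{n\to\infty}\P\big(a_n(\xi-\hat{\xi})\leqslant M_n\big)\geqslant\liminf_{n\to\infty}\bar{H}_\theta\big(\bar{H}_\theta^{-1}(1-\alpha)-\delta\big)$. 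Finally I would invoke Assumption~\ref{as:ic}: given $\varepsilon>0$ it supplies $N$ and $\delta_0>0$ with $\bar{H}_\theta(\bar{H}_\theta^{-1}(1-\alpha)-\delta)>1-\alpha-\varepsilon$ whenever $n\geqslant N$ and $0<\delta<\delta_0$; picking such a $\delta$ gives $\liminf_{n\to\infty}\P(a_n(\xi-\hat{\xi})\leqslant M_n)\geqslant1-\alpha-\varepsilon$, and letting $\varepsilon\downarrow0$ completes the proof.

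The only real obstacle is bookkeeping: the set $\mathcal{N}_n(\hat{\theta})$ is random, so one cannot specialize the maximum in Assumption~\ref{as:dz} at $\phi=\theta$ without first conditioning on $\{\theta\in\mathcal{N}_n(\hat{\theta})\}$; and the random slack $\zeta_n$ must be truncated at a deterministic level $\delta$ before the joint limit of Assumption~\ref{as:ic} can be applied. The delicate point is ensuring that the two limits, $n\to\infty$ and $\delta\to0$, are taken in a compatible order — which is precisely what the joint-limit phrasing of Assumption~\ref{as:ic} is designed to permit.
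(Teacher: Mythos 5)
Your proof is correct and follows essentially the same route as the paper's: rewrite the event in terms of the rescaled quantiles $\bar{H}_{\phi_l}^{-1}(1-\alpha)$, use Assumption \ref{as:dz} to show the maximum over try points nearly dominates $\bar{H}_\theta^{-1}(1-\alpha)$ on the event $\{\theta\in\mathcal{N}_n(\hat{\theta})\}$ supplied by Assumption \ref{as:n}, and then absorb the slack via the asymptotic continuity in Assumption \ref{as:ic}. The only difference is cosmetic: the paper routes the argument through a near-maximizer $\theta_n^*$ of $\sup_{\phi\in\mathcal{N}_n(\hat{\theta})}\bar{H}_\phi^{-1}(1-\alpha)$ and carries $o_p(1)$ terms inside the probabilities, whereas you apply Assumption \ref{as:dz} directly at $\phi=\theta$ and truncate the random slack at a deterministic $\delta$, which is a slightly tidier piece of bookkeeping of the same argument.
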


\begin{proof} For any $n$, there exists $\theta_n^*\in\mathcal{N}_n(\hat{\theta})$ such that $\sup_{\phi\in\mathcal{N}_n(\hat{\theta})}\bar{H}_\phi^{-1}(1-\alpha)
<\bar{H}_{\theta_n^*}^{-1}(1-\alpha)+1/n$. Denote $l^*=\arg\min_{l=1,\ldots,L_n} \big|\bar{H}_{\phi_l}^{-1}(1-\alpha)-\bar{H}_{\theta_n^*}^{-1}(1-\alpha)\big|$. We have
$\max_{l=1,\ldots,L_n}\bar{H}_{\phi_l}^{-1}(1-\alpha)\geqslant\bar{H}_{\phi_{l^*}}^{-1}(1-\alpha)\geqslant\bar{H}_{\theta_n^*}^{-1}(1-\alpha)
-|\bar{H}_{\phi_{l^*}}^{-1}(1-\alpha)-\bar{H}_{\theta_n^*}^{-1}(1-\alpha)|$. Therefore,
\begin{eqnarray*}&&\P\left(\xi\leqslant\hat{\xi}+\max_{l=1,\ldots,L_n}H_{\phi_l}^{-1}(1-\alpha)\right)=\P\left(a_n(\xi-\hat{\xi})\leqslant
\max_{l=1,\ldots,L_n}\bar{H}_{\phi_l}^{-1}(1-\alpha)\right)
\\&&\geqslant \P\left(a_n(\xi-\hat{\xi})\leqslant \bar{H}_{\theta_n^*}^{-1}(1-\alpha)
-|\bar{H}_{\phi_{l^*}}^{-1}(1-\alpha)-\bar{H}_{\theta_n^*}^{-1}(1-\alpha)|\right)
\\&&\geqslant \P\left(a_n(\xi-\hat{\xi})\leqslant\sup_{\phi\in\mathcal{N}_n(\hat{\theta})}\bar{H}_\phi^{-1}(1-\alpha)+o_p(1)-1/n\right)
\\&&\geqslant \P\left(a_n(\xi-\hat{\xi})\leqslant\bar{H}_\theta^{-1}(1-\alpha)+o_p(1)\right)-\P\left(\theta\in\mathcal{N}_n(\hat{\theta})\right)
\\&&=1-\alpha+o(1)-\P\left(\theta\in\mathcal{N}_n(\hat{\theta})\right)\to1-\alpha,
\end{eqnarray*}which completes the proof.\end{proof}

The following theorem is straightforward.

\begin{theorem}\label{th:aeb2}Under Assumptions \ref{as:n}--\ref{as:bc}, for all $\theta\in\Theta$ and $\alpha\in(0,1)$,
\begin{eqnarray*}\lim_{n\to\infty}\P\left(\xi\leqslant\hat{\xi}+\max_{l=1,\ldots,L_n}H_{\phi_l}^{-1}(1-\alpha)\right)
=1-\alpha.\end{eqnarray*}\end{theorem}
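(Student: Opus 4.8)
The plan is to pin down the limiting coverage probability by a two-sided bound, since the $\limsup$ comes for free from Theorem \ref{th:aeb} and the $\liminf$ comes from Theorem \ref{th:acr2}. For the upper side, observe that each try point $\phi_l$ belongs to $\mathcal N_n(\hat\theta)$, so $\max_{l}H_{\phi_l}^{-1}(1-\alpha)\leqslant\sup_{\phi\in\mathcal N_n(\hat\theta)}H_\phi^{-1}(1-\alpha)$; hence the coverage probability in the statement is no larger than $\P\big(\xi\leqslant\hat\xi+\sup_{\phi\in\mathcal N_n(\hat\theta)}H_\phi^{-1}(1-\alpha)\big)$, which converges to $1-\alpha$ by Theorem \ref{th:aeb}. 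This already gives $\limsup_{n\to\infty}\P(\,\cdot\,)\leqslant1-\alpha$, with no extra work.

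For the lower side I would invoke Theorem \ref{th:acr2}, so the task reduces to checking that its hypotheses, Assumptions \ref{as:n}, \ref{as:dz} and \ref{as:ic}, are implied by Assumptions \ref{as:n}--\ref{as:bc}. Assumption \ref{as:n} is shared. For Assumption \ref{as:ic}: since $a_n(\xi-\hat\xi)\rightarrow_d K$ with $K$ continuous, $\bar H_\theta\big(\bar H_\theta^{-1}(1-\alpha)-\delta\big)\to K\big(K^{-1}(1-\alpha)\big)=1-\alpha$ as $n\to\infty,\ \delta\to0$. For Assumption \ref{as:dz}: Assumptions \ref{as:nsz}--\ref{as:bc} give, almost surely, $\sup_{\phi\in\mathcal N_n(\hat\theta)}d_{\mathrm K}(\bar H_\phi,K)\to0$, and since $K$ is continuous and strictly increasing on its support this upgrades to uniform convergence of the $(1-\alpha)$-quantiles, $\sup_{\phi\in\mathcal N_n(\hat\theta)}|\bar H_\phi^{-1}(1-\alpha)-K^{-1}(1-\alpha)|\to0$ (a.s.); retaining a single try point $\phi_1$ then bounds $\max_{\phi}\min_{l}|\bar H_\phi^{-1}(1-\alpha)-\bar H_{\phi_l}^{-1}(1-\alpha)|$ by $\sup_{\phi}|\bar H_\phi^{-1}(1-\alpha)-K^{-1}(1-\alpha)|+|\bar H_{\phi_1}^{-1}(1-\alpha)-K^{-1}(1-\alpha)|\to0$, which is Assumption \ref{as:dz}. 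Theorem \ref{th:acr2} then yields $\liminf_{n\to\infty}\P(\,\cdot\,)\geqslant1-\alpha$, and combining with the upper bound finishes the proof.

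The main, and only mildly nontrivial, step is the last one: upgrading the pointwise-in-$\phi$ bootstrap consistency of Assumption \ref{as:bc}(ii) to uniformity over $\mathcal N_n(\hat\theta)$ and translating it from Kolmogorov distance to quantile convergence. Once that is in place everything is immediate, which is why the paper labels the statement ``straightforward''. An alternative that avoids Theorem \ref{th:acr2} is to note that the same uniform convergence sandwiches $m_n:=\max_{l}\bar H_{\phi_l}^{-1}(1-\alpha)$ between $\bar H_{\phi_1}^{-1}(1-\alpha)$ and $\sup_{\phi\in\mathcal N_n(\hat\theta)}\bar H_\phi^{-1}(1-\alpha)$, both of which converge a.s.\ to $K^{-1}(1-\alpha)$, after which a Slutsky-type argument using $a_n(\xi-\hat\xi)\rightarrow_d K$ and the continuity of $K$ gives the limit $K(K^{-1}(1-\alpha))=1-\alpha$ directly.
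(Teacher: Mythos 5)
Your argument is correct and is essentially the paper's intended one: the paper gives no written proof (it simply calls the theorem straightforward), relying on the sandwich between Theorem \ref{th:aeb} (upper bound, since every try point lies in $\mathcal{N}_n(\hat{\theta})$, so the max is dominated by the sup) and Theorem \ref{th:acr2} (lower bound), together with its stated remarks that Assumption \ref{as:dz} holds under Assumptions \ref{as:nsz}--\ref{as:bc} and Assumption \ref{as:ic} holds under Assumption \ref{as:bc}(i). You follow exactly this route and, in addition, actually verify those two implications (passing from Kolmogorov-distance consistency to quantile convergence, uniformly over the neighborhood), which the paper only asserts.
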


\section*{Acknowledgements}
\hskip\parindent \vspace{-0.8cm}

This work is supported by the National Natural Science Foundation of China (Grant No. 11271355, 11471172). The author is grateful to the support of Key Laboratory of
Systems and Control, Chinese Academy of Sciences.

 \vspace{1cm} \noindent{\Large\bf References}

{\begin{description}

\item{}
Bickel, P. J. and G\"{o}tze F, van Zwet, W. R. (1997), Resampling fewer than n observations gains, losses, and remedies for losses. \textit{Statistica Sinica}, 7: 1--31.

\item{}
Bickel, P. J. and Ren, J-J. (2001), The bootstrap in hypothesis testing, Lecture Notes-Monograph Series, Vol. 36, \textit{State of the Art in Probability and
Statistics}, 91--112.

\item{}
Blischke, W. R. (1974), On nonregular estimation. II. Estimation of the location parameter of the gamma and Weibull distributions. \textit{Communications in Statistics},
3, 1109--1129.

\item{}
Boyd, S. and Vandenberghe, L. (2004), \textit{Convex Optimization}. Cambridge University Press, Cambridge.

\item{}
Cheng, R. C. H. and Amin, N. A. K. (1983), Estimating parameters in continuous univariate distributions with a shifted origin. \textit{Journal of the Royal Statistical
Society, Ser. B}, {45}: 394--403.


\item{}
Cousineau, D. (2009), Fitting the three-parameter Weibull distribution: Review and evaluation of existing and new methods. \textit{IEEE Transactions on Dielectrics and
Electrical Insulation}, 16: 281--288.

\item{}
Cox, D. R. and Oakes, D. (1984), \textit{Analysis of Survival Data}. CRC Press.

\item{}
Davison, A. C. and Hinkley, D. V. (1997), \textit{Bootstrap Methods and Their Application}. Cambridge University Press, Cambridge.

\item{}
de Carvalho, M. (2011), Confidence intervals for the minimum of a function using extreme value statistics. \textit{International Journal of Mathematical Modelling \&
Numerical Optimisation}, 2: 288--296.

\item{}
de Haan, L. (1981), Estimation of the minimum of a function using order statistics. \textit{Journal of the American Statistical Association}, 76: 467--469.

\item{}
Efron, B. (1979), Bootstrap methods: Another look at the jackknife. \textit{Annals of Statistics}, 7: 1--26.

\item{}
Efron, B., Hastie, T., Johnstone, L., and Tibshirani, R. (2004), Least angle regression (with discussion). \textit{Annals of Statistics}, 32: 407--451.

\item{}
Ethier, S. N. (1982), Testing for favorable numbers on a roulette wheel. \textit{Journal of the American Statistical Association}, 77: 660--665.

\item{}
Fan, J. Guo, S., and Hao, N. (2012), Variance estimation using refitted cross-validation in ultrahigh dimensional regression. \textit{Journal of the Royal Statistical
Society, Ser. B}, {74}: 37--65.

\item{}
Fan, J. and Li, R. (2001), Variable selection via nonconcave penalized likelihood and its oracle properties. \textit{Journal of the American Statistical Association},
{96}: 1348--1360.

\item{}
Fan, J., Zhang, C., and Zhang J. (2001), Generalized likelihood ratio statistics and Wilks phenomenon. \textit{Annals of statistics}, 29: 153--193.

\item{}
Fang, K. T., Hickernell, F. J., and Winker, P. (1996), Some global optimization algorithms in statistics, in \textit{Lecture Notes in Operations Research}, eds. by Du,
D. Z., Zhang, X. S. and Cheng, K. World Publishing Corporation, 14--24.

\item{}
Fang, K. T., Lin, D. K. J., Winker, P., and Zhang, Y. (2000), Uniform design: theory and application. \textit{Technometrics}, 42: 237--248.

\item{}
Fisher, R. A. (1959), \textit{Statistical Methods and Scientific Inference}, 2nd ed., revised. Hafner Publishing Company, New York.

\item{}
Frank, L. E. and Heiser, W. J. (2008), Feature selection in feature network models: Finding predictive subsets of features with the positive Lasso. \textit{British
Journal of Mathematical and Statistical Psychology}, 61: 1--27.

\item{}
Gelfand, A. E., Glaz, J., Kuo, L., and Lee, T. M. (1992), Inference for the maximum cell probability under multinomial sampling. \textit{Naval Research Logistics}, 39:
97--114.

\item{}
Ghosh, J. K., Delampady, M., and Samanta, T. (2007), \textit{An Introduction to Bayesian Analysis: Theory and Methods}. Springer.

\item{}
Glaz, J. and Sison, C. P. (1999), Simultaneous confidence intervals for multinomial proportions. \textit{Journal of Statistical Planning and Inference}, 82: 251--262.

\item{}
Hall, P. (1992), \textit{The bootstrap and Edgeworth Expansion}, Springer, New York.

\item{}
H\"{a}rdle, W. and Luckhaus, S. (1984), Uniform consistency of a class of regression function estimators. \textit{Annals of Statistics}, 12: 612--623.

\item{}
Hart, J. D. (1997), \textit{Nonparametric Smoothing and Lack-of-Fit Tests}, Springer, New York.

\item{}
Homem-de-Mello T. (2008), On rates of convergence for stochastic optimization problems under non-independent and identically distributed sampling. \textit{SIAM Journal
on Optimization}, 19: 524--551.

\item{}
Johnson, M. E., Moore, L. M., and Ylvisaker, D. (1990), Minimax and maximin distance designs. \textit{Journal of statistical planning and inference}, 26: 131--148.

\item{}
Kushner, H. J. and Yin, G. G. (1997), \textit{Stochastic Approximation Algorithms and Applications}. Springer, New York.

\item{}
Lehmann, E. L. and Romano, J. P. (2006), \textit{Testing statistical hypotheses}, 3rd., Springer, Science \& Business Media.

\item{}
Lockhart, R. A. and Stephens, M. A. (1994), Estimation and tests of fit for the three-parameter Weibull distribution. \textit{Journal of the Royal Statistical Society,
Ser. B}, 56: 491--500.

\item{}
McKay, M. D., Beckman, R. J., and Conover, W. J. (1979), A comparison of three methods for selecting values of input variables in the analysis of output from a computer
code, \textit{Technometrics}, 21: 239--245.

\item{}
Meng, X-L. (1994), Posterior predictive $p$-values. \textit{Annals of Statistics}, 22: 1142--1160.

\item{}
Murthy, D. N. P., Xie, M., and Jiang, R. (2004), \textit{Weibull Models}. John Wiley \& Sons.

\item{}
Park, J. S. (2001), Optimal Latin-hypercube designs for computer experiments. \textit{Journal of Statistical Planning Inference}, 39: 15--111.

\item{}
Patil, G. P. and Taillie, C. (1979), An overview of diversity. \textit{Ecological Diversity in Theory and Practice}, ed. by Grassle, J. F. et al. International
Co-operative Publishing House. Fairland, MD.

\item{}
Politis, D. N., Romano, J. P., and Wolf, M. (1999), \textit{Subsampling}. Springer, New York.

\item{}
Shao, J. and Tu, D. (1995), \textit{The Jackknife and Bootstrap}. Springer, New York.

\item{}
Shapiro, A. (2003), Monte Carlo sampling methods, in Stochastic Programming. \textit{Handbook in OR \& MS}, Vol. 10, ed. by Ruszczy\'{n}ski, A.  and Shapiro, A.,
North-Holland, Amsterdam.

\item{}
Silvapulle, M. J. and Sen, P. K. (2011), \textit{Constrained Statistical Inference: Order, Inequality, and Shape Constraints}. John Wiley \& Sons.


\item{}
Teimouri, M., Hoseini, S. M., and Nadarajah, S. (2013), Comparison of estimation methods for the Weibull distribution. \textit{Statistics: A Journal of Theoretical and
Applied Statistics}, 47: 93--109.

\item{}
Tibshirani, R. (1996), Regression shrinkage and selection via the lasso. \textit{Journal of the Royal Statistical Society, Ser. B}. 58: 267--288.


\item{}
Wu, L., Yang, Y., and Liu, H. (2014), Nonnegative-lasso and application in index tracking. \textit{Computational Statistics} \& \textit{Data Analysis}, 70: 116--126

\item{}
Xiong, S. (2014), Better solution principle: A facet of concordance between optimization and statistics. arXiv preprint.

\item{}
Xiong, S. and Li, G. (2009), Inference for ordered parameters in multinomial distributions. \textit{Science in China, Ser. A: Mathematics}, 52: 526--538.

\item{}
Xiong, S. and Li, G. (2010), Dispersion comparisons of two probability vectors under multinomial sampling. \textit{Acta Mathematica Scientia}, 30: 907--918.

\item{}
Zhang. C-H. (2010), Nearly unbiased variable selection under minimax concave penalty. \textit{Annals of Statistics}, 38: 894--942.

\end{description}}

\end{document}